\documentclass[aps,pra,onecolumn,superscriptaddress,nofootinbib]{revtex4-2}

\usepackage{amsmath,amssymb,amsthm,mathtools}
\usepackage{bm}
\usepackage{graphicx}
\usepackage{xcolor}
\usepackage{booktabs}
\usepackage{enumitem}
\usepackage{tikz}
\usepackage{braket}

\DeclareMathOperator{\Tr}{Tr}
\DeclareMathOperator{\Ran}{Ran}
\DeclareMathOperator{\Span}{span}
\DeclareMathOperator{\diag}{diag}

\DeclareMathOperator{\rank}{rank}

\usepackage{hyperref}
\hypersetup{colorlinks=true, linkcolor=blue, citecolor=blue, urlcolor=blue}

\newcommand{\Id}{\ensuremath{\mathbb{I}}}
\newcommand{\TO}{\ensuremath{\mathrm{TO}}}

\newcommand{\CGP}{\ensuremath{\mathrm{CGP}}}
\newcommand{\TOfin}{\ensuremath{\mathrm{TO}_{\mathrm{fin}}}}

\newtheorem{theorem}{Theorem}
\newtheorem{lemma}{Lemma}
\newtheorem{definition}{Definition}
\newtheorem{corollary}{Corollary}
\newtheorem{proposition}{Proposition}

\newtheorem{remark}{Remark}

\begin{document}

\title{Limitations on Joint Coherence Transfer in Quantum Thermodynamics}

\author{Piotr {\'C}wikli{\'n}ski}
\affiliation{International Centre for Theory of Quantum Technologies, University of Gda{\'n}sk, ul. prof. Marii Janion 7, 80-309 Gda\'nsk, Poland}

\author{Micha{\l} Studzi{\'n}ski}
\affiliation{International Centre for Theory of Quantum Technologies, University of Gda{\'n}sk, ul. prof. Marii Janion 7, 80-309 Gda\'nsk, Poland}

\date{\today}

\begin{abstract}
Coherences between different energy levels are strongly constrained by thermodynamics. Here we ask a related question: if several coherence transfers can be optimized separately, can they also be optimized simultaneously by the same thermodynamic process? We show that this is in general a compatibility problem. Using an energy-resolved Stinespring representation of thermal operations, we express elementary coherence-transfer amplitudes as inner products of bath-weighted vectors associated with total-energy shells. Saturation of the corresponding Cauchy--Schwarz bounds requires these vectors to be collinear. Since all transfers have to arise from the same energy-preserving system--bath unitary, such saturation conditions cannot in general be imposed independently. Trace and Gibbs preservation lead to additional phase-closure conditions and to polygon inequalities that can rule out simultaneous saturation for fixed population-transfer data.

We complement these analytic results with a rigorous computer-assisted
semidefinite-program analysis. For the mixed-degenerate Hamiltonian
\(H=\operatorname{diag}(0,1,1,3)\), machine-certified primal and dual
bounds show that no covariant Gibbs-preserving channel can
simultaneously attain three individually optimal directional coherence
transfers. Since every thermal operation is covariant and Gibbs
preserving, the same triple of \(\CGP\)-optimal values cannot be
attained by a thermal operation. We also analyze auxiliary Choi-sector
Gram matrices for mixed-degenerate ladders. Under specified
directional rank assumptions, positivity imposes range-inclusion
conditions, from which we obtain conditional non-genericity
statements for physical joint saturation.
\end{abstract}

\maketitle

\section{Introduction}
\label{sec:intro}
Quantum coherence is central in quantum thermodynamics and resource-theoretic formulations of nonequilibrium
processes \cite{Streltsov2017,HorodeckiOppenheim2013,Brandao2015,Faist2015,Lostaglio2019RoPP,Gour2022RoleCoherence,Marvian2022QFI,Kondra2024CoherenceManipulation, KorzekwaWorkFromCoh2016, ChitambarGour2019, Gour2015}. Thermal
operations (\TO) form a physically motivated class of processes generated by energy-preserving unitaries on
system and bath \cite{Janzing2000,HorodeckiOppenheim2013}. In this setting, intra-eigenspace transformations within
degenerate manifolds can be comparatively unconstrained, while inter-eigenspace coherences are restricted by
time-translation symmetry and Gibbs-preservation
\cite{Lostaglio2015NatComm,Lostaglio2015PRX,PhysRevLett.115.210403}. The symmetry part of these
constraints is naturally formulated within the resource theory of asymmetry
\cite{MarvianSpekkens2014,MarvianSpekkens2016Coherence,MarvianSpekkensModes2014}.

Beyond their foundational role, thermal operations and closely related frameworks continue to be actively
developed, including structural analyses of their convex geometry and decomposability \cite{Mazurek_2018},
operational advantages in simulating stochastic processes \cite{KorzekwaL2020-TO-advantage}, catalytic
assistance \cite{PhysRevResearch.6.033203,SonNg2025HierarchyCatalysis,LipkaBartosikWilmingNg2024CatalysisReview}, coherence catalysis \cite{Aberg2014}, elementary and experimentally implementable thermal
operations \cite{LostaglioAlhambraPerry2018ElementaryTO,PerryCwiklinskiAndersHorodeckiOppenheim2018ImplementableTO,HackMendl2025}, and coherent thermodynamic advantages
\cite{PhysRevLett.134.160402}. 
Recent work has also sharpened the distinction between thermal operations and larger Gibbs-preserving or enhanced thermal classes, especially in the presence of coherence \cite{Ding2021GapTOEnTO,TajimaTakagi2025}.

In the nondegenerate setting, coherence transport can often be described through scalar damping factors.
For example, when the Bohr spectrum is nondegenerate, each off-diagonal matrix element
\(\ket{i}\!\bra{j}\) belongs to its own one-dimensional mode. The damping-matrix-positivity (DMP) analysis
then gives scalar constraints of the form
\(|\kappa_{ij}|^2\le p(i\to i)p(j\to j)\),
where \(p(i\to i)\) and \(p(j\to j)\) are population stay probabilities associated with the induced diagonal
transition matrix. Equality in such a bound is equality in a \(2\times2\) positive-Gram
Cauchy--Schwarz inequality.

For arbitrary Hamiltonians, however, the natural objects are not individual coherence entries but whole
Bohr-frequency modes
\(\mathcal M_\omega=\Span\{\ket{e}\!\bra{e'}:\ E_e-E_{e'}=\omega\}\).
Time-translation covariance preserves these modes. If an energy level is degenerate, a single Bohr mode can
be multidimensional. This is precisely where mode-wise scalar intuition becomes insufficient: saturation in
one direction of a degenerate mode need not imply that the full mode Choi block, or the full mode
superoperator, has rank one.

\medskip
\noindent\textbf{Energy-level degeneracy versus Bohr-spectrum degeneracy.}
TThroughout this work, degeneracy means degeneracy of energy eigenspaces,
i.e. \(\rank\Pi_i>1\) for some spectral projector \(\Pi_i\) of \(H_S\).
This is distinct from degeneracy of the Bohr spectrum, i.e. repeated
energy gaps between different energy values. Repeated gaps can occur
even for fully nondegenerate Hamiltonians. In the minimal
mixed-degenerate ladder \((1,g,1)\), two coherence sectors have
dimension \(g\), while another has dimension one.
Related sensitivity of thermal-operation structure to bath
Hamiltonians and Bohr-spectrum degeneracies has also been emphasized
in recent work on which bath Hamiltonians matter for thermal
operations \cite{VomEnde2022BathHamiltonians}.

This distinction is important physically. Degenerate eigenspaces allow internal unitary freedom, but
inter-eigenspace coherences are still constrained by covariance and by the fact that a thermal operation
comes from one global energy-preserving unitary. Thus degeneracy does not simply provide more independent
control. It also creates larger coherence-mode spaces whose equality conditions must be made mutually
compatible.

\medskip
\noindent\textbf{Thermomajorization as the diagonal layer.}
For energy-diagonal states, thermomajorization characterizes
reachability under \(\overline{\TOfin}\), equivalently reachability
to arbitrary accuracy using finite-bath thermal operations
\cite{HorodeckiOppenheim2013,Brandao2015,Renes2014,Egloff2015,
ScharlauMuller2018}. Approximate variants of majorization are
also useful when transformations are specified up to finite tolerance
\cite{ApproxMajorization2018}. Recent geometric and dynamical refinements include thermomajorization polytopes with degeneracies and continuous thermomajorization for Markovian thermal processes \cite{VomEndeMalvetti2024ThermomajorizationPolytope,LostaglioKorzekwa2022ContinuousThermomajorization,KorzekwaLostaglio2022OptimizingThermalization}. The present work concerns the layer above the diagonal one: once the diagonal
transition is fixed or constrained, which coherence transfers can be optimized
jointly by the same physical process? Related questions arise in the general
theory of quantum channels with fixed classical transition matrices, where the
amount of coherence compatible with a prescribed stochastic action can be
studied through the corresponding Jamio{\l}kowski state
\cite{KorzekwaEtAl2018Coherifying}.

\subsection{Joint compatibility and mixed energy-level degeneracy}

Mode-wise coherence bounds and related coherence-processing
constraints are sharp in important one-sector or low-dimensional
settings
\cite{PhysRevLett.115.210403,Lostaglio2015NatComm,Lostaglio2015PRX,HuDing2019}.
Related constraints linking coherence evolution to prescribed
population dynamics have also been obtained for Markovian
time-translation-covariant processes
\cite{LostaglioKorzekwaMilne2017}.
The qutrit analysis of Ref.~\cite{PhysRevLett.115.210403} exhibits
limitations on exact saturation of DMP bounds under thermal
operations, under the assumptions considered there.
Here we ask whether one channel can attain several individually
optimized directional coherence-transfer values. Mixed-degenerate
ladders provide a setting with coherence sectors of unequal
dimensions and objectives that select directions within these sectors.

A thermal operation is specified by one Stinespring dilation with an
energy-preserving unitary, block diagonal in total energy. All
coherence transfers are therefore built from one common family of
total-energy-shell amplitudes. Simultaneous saturation of several
Cauchy--Schwarz bounds requires the corresponding nonzero amplitude
vectors to satisfy simultaneous collinearity conditions. We represent
these requirements by a saturation graph.

We study compatibility using shell-amplitude equality conditions and
positive Gram-matrix compressions.

First, elementary coherence-transfer amplitudes of every completely positive map can be written as inner
products
\begin{align}
\label{eq:coh}
\Phi(\ket{i}\!\bra{j})_{rs}
=
\langle v_{s|j},v_{r|i}\rangle.
\end{align}
For a thermal operation, the generic vectors \(v_{r|i}\) admit energy-resolved, bath-weighted
representatives \(w_{r|i}\)
associated with system transitions \(i\to r\). Equality in the elementary coherence-transfer bound is
precisely equality in Cauchy--Schwarz and therefore forces
\(w_{r|i}\parallel w_{s|j}.\) Thus joint saturation imposes simultaneous collinearity constraints on the
amplitudes of one common thermal-operation dilation.

Second, positivity of a common Choi/DMP Gram object imposes
range-inclusion constraints on rank-deficient directional blocks.
For a rank-one directional block \(A=\ket{\psi}\!\bra{\psi}\),
positivity of
\begin{align}
\begin{pmatrix}
A&B\\
B^\dagger&C
\end{pmatrix}
\end{align}
implies \(\Ran B\subseteq\Ran A=\Span\{\psi\}.\)
These range-inclusion constraints become polynomial alignment
equations. For \(g\ge2\), \(v\in\Span\{\psi\}\) is equivalent to the
vanishing of all minors \(v_a\psi_b-v_b\psi_a=0.\)
Their application to physical joint saturation is formulated in
Theorem~\ref{thm:conditionalAlgebraic}, with explicit hypotheses on
the directional ranks and the real-analytic physical-to-auxiliary map.

\subsection{Main results}
\label{subsec:results-scope}

Here we study whether coherence transfers that can be optimized separately can also be realized simultaneously by the same thermodynamic process. We first show that elementary coherence-transfer amplitudes can be written as Gram inner products. For thermal operations these Gram vectors have an additional physical interpretation: they are bath-weighted transition amplitudes supported on total-energy shells and arise from one energy-preserving system--bath unitary. Saturating an elementary coherence-transfer bound is then equivalent to saturating a Cauchy--Schwarz inequality and forces the corresponding shell-amplitude vectors to be collinear. Thus, when several transfers are to be saturated at the same time, the required collinearities have to be compatible within one common dilation.

We next show that trace and Gibbs preservation impose further constraints on such local saturation conditions. In particular, the relevant complex transfer amplitudes have to close to zero. This gives polygon inequalities which can exclude simultaneous saturation once the diagonal transition probabilities are fixed. The trace version of this argument holds for every CPTP map, while its Gibbs-weighted analogue holds for every Gibbs-preserving completely positive map. Thermal operations therefore satisfy both constraints. What is specific to thermal operations is the additional requirement that all amplitudes have the shell support imposed by energy conservation and can be completed to the same unitary on each total-energy shell.

We then consider the mixed-degenerate Hamiltonian
\begin{equation}
H=\operatorname{diag}(0,1,1,3)
\end{equation}
and study three directional coherence transfers for fixed final energy populations. In this case we optimize over the larger class of covariant Gibbs-preserving channels, for which the problem is a semidefinite program. Using machine-certified primal and dual witnesses, we prove that the three transfers, although individually optimal within this class, cannot all attain their individual optimal values for one and the same channel. Since every thermal operation is covariant and Gibbs preserving, no thermal operation can attain this same triple of values. This gives a rigorous incompatibility result by relaxation. The strict separation is established from fixed primal and dual witnesses verified independently in rigorous ball arithmetic.

Finally, we study auxiliary positive Choi-sector Gram matrices for
a mixed-degenerate \((1,g,1)\) ladder. Under the stated rank-one
directional-block assumptions, positivity forces the coupled data
into the corresponding ranges, giving explicit determinantal
alignment conditions. When physical joint saturation implies these
rank and alignment conditions, and at least one alignment polynomial
has a nontrivial real-analytic pullback to the physical parameter
manifold, the joint-saturation locus has measure zero.

For the certified instance, a further question concerns the
individual and joint optima within thermal operations. The
corresponding incompatibility would be expressed by
\begin{equation}
\sup_{\Phi\in\Gamma(\rho,p^f;\TO)}
\sum_k L_k(\Phi)
<
\sum_k
\sup_{\Phi\in\Gamma(\rho,p^f;\TO)}
L_k(\Phi).
\end{equation}
The shell-amplitude representation provides a framework for studying
this comparison by expressing the transfer objectives and the
common shell-unitary constraints within one dilation.

The paper is organized as follows. Section~\ref{sec:framework} introduces the framework, including thermal operations, covariant Gibbs-preserving maps, and the Bohr-mode decomposition. In Sec.~\ref{sec:TO-shell} we derive the Kraus--Gram representation and its energy-resolved thermal-operation realization in terms of shell amplitudes. Section~\ref{sec:to-nogo} contains the trace and Gibbs-weighted polygon obstructions, while Sec.~\ref{sec:cgp-certificate} presents the certified semidefinite-program example for the mixed-degenerate Hamiltonian \(H=\operatorname{diag}(0,1,1,3)\). In Secs.~\ref{sec:gram-mechanism} and~\ref{sec:mixed-ladder} we
develop the auxiliary Choi-sector Gram description and the
conditional directional alignment analysis for mixed-degenerate
ladders.

\section{Framework}
\label{sec:framework}

This section fixes the notation and basic structures used throughout the paper. We first recall thermal
operations and the larger class of covariant Gibbs-preserving maps. We then describe the Bohr-mode
decomposition induced by time-translation covariance and isolate the mixed energy-level degeneracy setting
used in the main examples. Finally, we recall the scalar DMP intuition, its mode-transport analogue, and the
directional saturation ratios used to formulate joint coherence-transfer tasks.

\subsection{Thermal operations and covariant Gibbs-preserving maps}

Let $H_S=\sum_\alpha E_\alpha\Pi_\alpha$ be a finite-dimensional
system Hamiltonian, where the spectral projectors $\Pi_\alpha$ may have
rank larger than one. We use $i,j,r,s$ for energy eigenbasis labels and
write $E_i$ for the energy corresponding to the basis state $\ket{i}$.
Thus, inside a degenerate eigenspace, one may write $i=(\alpha,a)$,
where $\alpha$ labels the energy and $a$ the degeneracy index in
$\Ran\Pi_\alpha$. The transition probability $p(r|i)$ used below is a
basis-state transition probability. Transition probabilities between
energy eigenspaces are obtained by summing over the corresponding
degeneracy labels.

A finite-bath thermal operation at inverse temperature $\beta$ is a CPTP
map of the form
\begin{equation}
\Lambda(\rho)
=
\Tr_B\!\left[
U(\rho\otimes\tau_B)U^\dagger
\right],
\label{eq:TOdef}
\end{equation}
where
\begin{equation}
\tau_B
=
\frac{e^{-\beta H_B}}{\Tr(e^{-\beta H_B})},
\qquad
[U,H_S+H_B]=0.
\end{equation}
Thus the global unitary is block diagonal in total energy. Throughout
the paper, $\TOfin$ denotes thermal operations admitting such a
realization with a finite-dimensional bath, and we use
$\TO\equiv\TOfin$ unless a closure is written explicitly. We denote the
channel-norm closure by $\overline{\TOfin}$. This distinction is relevant
for optimization, since a supremum over finite baths need not be
attained. In particular, a vanishing compatibility deficit over
$\TOfin$ may describe asymptotic rather than exact attainability.
Finite-bath and closure subtleties are discussed in
Ref.~\cite{ScharlauMuller2018}; a recent structural characterization of
thermal operations is given in Ref.~\cite{LieEtAl2026}.

Every thermal operation is time-translation covariant,
\begin{align}
\Lambda(e^{-itH_S}\rho e^{itH_S})
=
e^{-itH_S}\Lambda(\rho)e^{itH_S}
\qquad
\forall t\in\mathbb R,
\end{align}
and Gibbs preserving,
\begin{equation}
\Lambda(\gamma_S)=\gamma_S,
\qquad
\gamma_S
=
\frac{e^{-\beta H_S}}{\Tr(e^{-\beta H_S})}.
\end{equation}
We denote by $\CGP$ the class of covariant Gibbs-preserving CPTP maps.
Since this class is closed,
\begin{equation}
\TO=\TOfin
\subseteq
\overline{\TOfin}
\subseteq
\CGP .
\label{eq:TOsubsetCGP}
\end{equation}
This inclusion will be useful in Sec.~\ref{sec:cgp-certificate}: a set
of coherence-transfer values which is impossible for $\CGP$ is also
impossible for thermal operations. The gap between thermal operations
and larger thermodynamic process classes is a known structural issue,
related to quantum majorization, enhanced thermal operations, and
coherent Gibbs-preserving dynamics
\cite{Gour2018QuantumMajorization,Ding2021GapTOEnTO,
PhysRevLett.134.160402}. Implementing general Gibbs-preserving maps by thermal operations
assisted by an additional coherent resource can require unbounded
coherence \cite{TajimaTakagi2025}.

\subsection{Bohr modes and mixed energy-level degeneracy}
\label{subsec:bohr-modes-mixed-degeneracy}

Time-translation covariance decomposes operator space into
Bohr-frequency modes. For every Bohr frequency $\omega$, define
\begin{equation}
\mathcal M_\omega
:=
\left\{
X:
e^{-itH_S}Xe^{itH_S}
=
e^{-it\omega}X
\ \forall t
\right\}.
\end{equation}
Equivalently,
\begin{equation}
\mathcal M_\omega
=
\Span\left\{
\ket{e}\!\bra{e'}:
E_e-E_{e'}=\omega
\right\}.
\end{equation}
If $\Phi$ is covariant, then
\begin{equation}
\Phi(\mathcal M_\omega)
\subseteq
\mathcal M_\omega
\end{equation}
\cite{MarvianSpekkensModes2014}, and coherences belonging to different
Bohr frequencies do not mix.

When the eigenspaces of $H_S$ are degenerate, the corresponding
coherence sectors can be multidimensional. If
\begin{equation}
X_{\alpha\beta}
:=
\Pi_\alpha X\Pi_\beta,
\end{equation}
then $X_{\alpha\beta}$ is a $g_\alpha\times g_\beta$ block, where
$g_\alpha:=\rank\Pi_\alpha$. After vectorization, this block has
dimension $g_\alpha g_\beta$. If several pairs of energy eigenspaces
have the same Bohr frequency, the corresponding blocks belong to the
same mode $\mathcal M_\omega$. Energy-level degeneracy therefore
naturally gives rise to higher-dimensional coherence sectors.

We will refer to the case in which the energy eigenspaces have unequal
dimensions as mixed energy-level degeneracy.

\begin{definition}[Mixed energy-level degeneracy]
We say that $H_S$ has mixed energy-level degeneracy if its eigenspace
dimensions are not all equal. In the minimal ladder considered below,
\begin{equation}
(g_0,g_1,g_2)
=
(1,g,1),
\qquad
g\ge2.
\end{equation}
The coherence sectors connecting $E_0\leftrightarrow E_1$ and
$E_1\leftrightarrow E_2$ then have dimension $g$, while the sector
connecting $E_0\leftrightarrow E_2$ has dimension one.
\end{definition}

We distinguish energy-level degeneracy from degeneracy of the Bohr
spectrum. The latter means that distinct pairs of energy values have
the same energy gap and can occur even when all energy levels are
nondegenerate. In the problems considered here, the unequal dimensions
of the relevant coherence sectors arise from energy-level degeneracy.

\subsection{Scalar DMP bounds and mode-transport bounds}

When the relevant nonzero Bohr modes are one dimensional, each
off-diagonal matrix element evolves independently under a covariant
channel,
\begin{equation}
\Phi(\ket{i}\!\bra{j})
=
\kappa_{ij}\ket{i}\!\bra{j},
\qquad
i\neq j,
\end{equation}
while the diagonal dynamics is described by transition probabilities
$p(i\to r)$. Damping-matrix positivity then gives
\begin{equation}
|\kappa_{ij}|^2
\le
p(i\to i)p(j\to j).
\label{eq:scalarDMP}
\end{equation}
This is the scalar Cauchy--Schwarz constraint appearing in the original
DMP analysis and in related low-temperature coherence bounds
\cite{PhysRevLett.115.210403,NarasimhacharGour2015}.

For degenerate modes the situation is different. A final coherence
element $\sigma_{rs}$ may receive contributions from several initial
coherences $\rho_{ij}$ with the same Bohr frequency,
\begin{equation}
E_i-E_j
=
E_r-E_s.
\end{equation}
For a thermal operation, the shell-amplitude representation derived in
Sec.~\ref{sec:TO-shell} gives
\begin{equation}
\sigma_{rs}
=
\sum_{i,j:\,E_i-E_j=E_r-E_s}
\rho_{ij}\,
\langle w_{s|j},w_{r|i}\rangle,
\label{eq:modeTransportShellPreview}
\end{equation}
where $w_{r|i}$ are bath-weighted transition-amplitude vectors
associated with one and the same energy-preserving unitary.
Consequently,
\begin{equation}
|\sigma_{rs}|
\le
\sum_{i,j:\,E_i-E_j=E_r-E_s}
|\rho_{ij}|
\sqrt{p(r|i)p(s|j)}.
\label{eq:modeTransportBoundPreview}
\end{equation}
If the corresponding Bohr mode contains only one contributing matrix
element, Eq.~\eqref{eq:modeTransportBoundPreview} reduces to the scalar
DMP bound.

\subsection{Directional objectives and saturation ratios}

In a multidimensional coherence sector there is no unique scalar
quantity describing coherence transfer. One may consider, for example,
an operator norm of the whole block, a particular matrix element, or a
directional linear witness. We therefore formulate the compatibility
problem for a general collection of normalized coherence-transfer
objectives.

Suppose that a mode-wise or directional objective satisfies a sharp
bound
\begin{equation}
Q(\Phi,\rho)
\le
K,
\qquad
K>0.
\end{equation}
We define the corresponding saturation ratio as
\begin{equation}
\eta(\Phi)
:=
\frac{Q(\Phi,\rho)}{K},
\end{equation}
so that $0\le\eta(\Phi)\le1$. The ratios in this subsection are defined for nonnegative
objectives satisfying \(0\le Q(\Phi,\rho)\le K\) throughout the
specified feasible set. Joint saturation means that all selected
ratios equal one for the same channel. For the signed directional
functionals in Sec.~\ref{sec:cgp-certificate}, we compare the
individual and joint maxima directly.

Let $\mathcal E$ be a collection of coherence-transfer objectives and
let $p^f$ be a fixed target vector of energy-level populations. For a
class of allowed channels $\mathfrak F$, define
\begin{align}
\Gamma(\rho,p^f;\mathfrak F)
:=
\left\{
\Phi\in\mathfrak F:
\Tr\!\left(\Pi_\alpha\Phi(\rho)\right)
=
p_\alpha^f
\ \forall\alpha
\right\}.
\end{align}
Here $\alpha$ labels the distinct energy values of $H_S$, so only
coarse-grained populations of the energy eigenspaces are fixed. We
define the joint score
\begin{equation}
\alpha_{\mathcal E}(\rho,p^f;\mathfrak F)
:=
\sup_{\Phi\in\Gamma(\rho,p^f;\mathfrak F)}
\min_{\ell\in\mathcal E}\eta_\ell(\Phi),
\label{eq:jointScore}
\end{equation}
and the corresponding compatibility deficit
\begin{equation}
\Delta_{\mathcal E}(\rho,p^f;\mathfrak F)
:=
1-\alpha_{\mathcal E}(\rho,p^f;\mathfrak F).
\label{eq:compatibilityDeficit}
\end{equation}
These quantities are considered only when
$\Gamma(\rho,p^f;\mathfrak F)\neq\varnothing$. If the supremum is
attained, $\Delta_{\mathcal E}=0$ is equivalent to exact joint
saturation. Without an attainment result, in particular for
$\mathfrak F=\TOfin$, it implies only asymptotic joint attainability.
For the finite-dimensional $\CGP$ feasible set considered in
Sec.~\ref{sec:cgp-certificate}, the optimum is attained.

\section{Generic Kraus--Gram identities and thermal-operation shell amplitudes}
\label{sec:TO-shell}

We begin with a Gram representation of coherence-transfer amplitudes which holds for every completely
positive map. For thermal operations, the corresponding vectors acquire an energy-resolved form: they are
supported on allowed total-energy shells and arise from one common energy-preserving unitary
\cite{Stinespring1955PositiveFunctions,AwZawBalanzoJuandoScarani2024Dilations}.

\begin{theorem}[Generic Kraus--Gram representation]
\label{thm:krausGram}
Let $\Phi:M_d(\mathbb C)\to M_d(\mathbb C)$ be completely positive, with
$\Phi(X)=\sum_{a=1}^N K_aXK_a^\dagger$. For basis labels $i,r$, define
\begin{equation}
\ket{v_{r|i}}
:=
\sum_{a=1}^N(K_a)_{ri}\ket a.
\label{eq:genericKrausVector}
\end{equation}
Then
\begin{equation}
p(r|i)
:=
\bra r\Phi(\ket i\!\bra i)\ket r
=
\|v_{r|i}\|^2,
\qquad
\bra r\Phi(\ket i\!\bra j)\ket s
=
\langle v_{s|j},v_{r|i}\rangle.
\label{eq:genericKrausGram}
\end{equation}
Consequently,
\begin{equation}
\left|
\bra r\Phi(\ket i\!\bra j)\ket s
\right|
\le
\sqrt{p(r|i)p(s|j)},
\label{eq:genericKrausCS}
\end{equation}
with equality for nonzero vectors if and only if $v_{r|i}$ and $v_{s|j}$ are complex-collinear.
\end{theorem}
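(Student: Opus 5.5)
The plan is a direct computation from the Kraus form, followed by an appeal to Cauchy--Schwarz in $\mathbb C^N$. Throughout, the inner product is taken conjugate-linear in the first slot, so $\langle x,y\rangle=\sum_a \overline{x_a}\,y_a$. This convention is the one that makes the order of arguments in Eq.~\eqref{eq:genericKrausGram} come out correctly.

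First, expand the matrix element. For each Kraus operator,
\begin{equation}
\bra r K_a\ket i\!\bra j K_a^\dagger\ket s=(K_a)_{ri}\,\overline{(K_a)_{sj}} .
\end{equation}
Summing over $a$ gives
\begin{equation}
\bra r\Phi(\ket i\!\bra j)\ket s=\sum_a \overline{(K_a)_{sj}}\,(K_a)_{ri}=\langle v_{s|j},v_{r|i}\rangle ,
\end{equation}
directly from the definition \eqref{eq:genericKrausVector}. Setting $j=i$ and $s=r$ yields $p(r|i)=\sum_a|(K_a)_{ri}|^2=\|v_{r|i}\|^2$. I will also note that the vectors depend on the choice of Kraus decomposition only up to a common isometry $\ket a\mapsto\sum_b V_{ba}\ket b$. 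All inner products are invariant under such an isometry, so the statement is well posed for any Kraus representation. This remark is not needed for the proof itself.

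Next, the inequality \eqref{eq:genericKrausCS} follows at once from the Cauchy--Schwarz inequality in $\mathbb C^N$:
\begin{equation}
|\langle v_{s|j},v_{r|i}\rangle|\le\|v_{s|j}\|\,\|v_{r|i}\|=\sqrt{p(s|j)\,p(r|i)} .
\end{equation}

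For the equality case, I will use the standard characterization of equality in Cauchy--Schwarz. Suppose both vectors are nonzero. Equality holds iff $v_{r|i}=c\,v_{s|j}$ for some $c\in\mathbb C\setminus\{0\}$. To see this, decompose $v_{r|i}=c\,v_{s|j}+u$ with $u\perp v_{s|j}$ and $c=\langle v_{s|j},v_{r|i}\rangle/\|v_{s|j}\|^2$. Pythagoras then gives $\|v_{r|i}\|^2\|v_{s|j}\|^2-|\langle v_{s|j},v_{r|i}\rangle|^2=\|u\|^2\|v_{s|j}\|^2$, which vanishes iff $u=0$. Conversely, collinearity clearly gives equality.

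There is no real obstacle in this argument. The only points needing care are the conjugation convention, so that the order of arguments in the Gram identity matches, and the restriction to nonzero vectors in the equality clause. If either vector vanishes, both sides of \eqref{eq:genericKrausCS} are zero, and collinearity is trivial or undefined, which is why the statement excludes that case.
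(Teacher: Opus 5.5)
Your proposal is correct and follows the paper's proof essentially verbatim. Both expand the matrix element from the Kraus form to $\sum_a (K_a)_{ri}\overline{(K_a)_{sj}}=\langle v_{s|j},v_{r|i}\rangle$, set $j=i$, $s=r$ for the norm identity, and invoke Cauchy--Schwarz with its equality case; your orthogonal-decomposition proof of that equality case and your remark on independence from the Kraus representation are correct elaborations of points the paper takes as standard.
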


\begin{proof}
By definition,
\begin{equation}
\bra r\Phi(\ket i\!\bra j)\ket s
=
\sum_a(K_a)_{ri}\overline{(K_a)_{sj}}
=
\langle v_{s|j},v_{r|i}\rangle.
\end{equation}
Setting $j=i$ and $s=r$ gives the norm identity. Equation~\eqref{eq:genericKrausCS} then follows from the
Cauchy--Schwarz inequality, with equality for nonzero vectors if and only if the two vectors are linearly
dependent.
\end{proof}

\begin{corollary}[Generic Gram conservation]
\label{cor:genericGramConservation}
If $\Phi$ is trace preserving, then
\begin{equation}
\sum_r
\langle v_{r|j},v_{r|i}\rangle
=
\delta_{ij}.
\label{eq:genericTPGram}
\end{equation}
If $\Phi$ preserves the diagonal state
$\gamma=\sum_i\gamma_i\ket i\!\bra i$, then
\begin{equation}
\sum_i
\gamma_i
\langle v_{s|i},v_{r|i}\rangle
=
\gamma_r\delta_{rs}.
\label{eq:genericGPGram}
\end{equation}
\end{corollary}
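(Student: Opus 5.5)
The plan is to reduce both identities to the matrix-element identity of Theorem~\ref{thm:krausGram}, $\bra r\Phi(\ket i\!\bra j)\ket s=\langle v_{s|j},v_{r|i}\rangle$, and then sum it appropriately. No Cauchy--Schwarz argument is needed; both statements are linear consequences of the Gram representation.

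\textbf{Trace preservation.} Write $\Tr\Phi(\ket i\!\bra j)=\sum_r\bra r\Phi(\ket i\!\bra j)\ket r$ and take $s=r$ in Theorem~\ref{thm:krausGram}. This gives
\begin{equation}
\Tr\Phi(\ket i\!\bra j)=\sum_r\langle v_{r|j},v_{r|i}\rangle .
\end{equation}
Trace preservation means $\Tr\Phi(X)=\Tr X$ for all $X$, and in particular $\Tr\Phi(\ket i\!\bra j)=\Tr\ket i\!\bra j=\delta_{ij}$. This yields Eq.~\eqref{eq:genericTPGram}.

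Equivalently, $\sum_aK_a^\dagger K_a=\Id$ has $(j,i)$ entry $\sum_a\overline{(K_a)_{rj}}(K_a)_{ri}$ summed over $r$, which is the same identity. I would mention this as a cross-check of the conjugation convention: the inner product should be antilinear in the first slot, consistent with the proof of Theorem~\ref{thm:krausGram}.

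\textbf{Gibbs preservation.} By linearity, $\Phi(\gamma)=\sum_i\gamma_i\Phi(\ket i\!\bra i)$. Taking the $(r,s)$ matrix element and applying Theorem~\ref{thm:krausGram} with $j=i$ gives
\begin{equation}
\bra r\Phi(\gamma)\ket s=\sum_i\gamma_i\langle v_{s|i},v_{r|i}\rangle .
\end{equation}
Since $\Phi(\gamma)=\gamma$ is diagonal, this matrix element equals $\gamma_r\delta_{rs}$, which is Eq.~\eqref{eq:genericGPGram}.

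\textbf{Obstacle.} There is no substantive obstacle. The only care needed is bookkeeping of the index and conjugation order, so that the inner products appear as $\langle v_{r|j},v_{r|i}\rangle$ and $\langle v_{s|i},v_{r|i}\rangle$ rather than their complex conjugates. I would check this directly against the explicit sum $\sum_a(K_a)_{ri}\overline{(K_a)_{sj}}$ from the proof of Theorem~\ref{thm:krausGram}.
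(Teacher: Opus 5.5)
Your proposal is correct and matches the paper's proof: both identities come from summing the Gram identity of Theorem~\ref{thm:krausGram}, over diagonal output entries with $s=r$ for trace preservation, and over inputs with $j=i$ weighted by $\gamma_i$ for Gibbs preservation. The Kraus-completeness cross-check via $\sum_a K_a^\dagger K_a=\Id$ is a harmless extra; the paper uses the analogous unitarity check only later, in Lemma~\ref{lem:shellGramConservation}.
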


\begin{proof}
Trace preservation gives
\begin{equation}
\sum_r
\langle v_{r|j},v_{r|i}\rangle
=
\Tr\Phi(\ket i\!\bra j)
=
\delta_{ij}.
\end{equation}
Similarly,
\begin{equation}
\sum_i
\gamma_i
\langle v_{s|i},v_{r|i}\rangle
=
\bra r\Phi(\gamma)\ket s
=
\gamma_r\delta_{rs}.
\end{equation}
\end{proof}

For a thermal operation, the natural Kraus operators are
\begin{equation}
K_{\nu\mu}
=
\sqrt{q_\mu}\,
{}_B\!\langle\nu|U|\mu\rangle_B,
\label{eq:thermalKrausOperators}
\end{equation}
and energy conservation imposes
\begin{equation}
(K_{\nu\mu})_{ri}
=
0
\quad\text{unless}\quad
E_i+\epsilon_\mu
=
E_r+\epsilon_\nu.
\label{eq:shellSupport}
\end{equation}
The corresponding amplitudes are therefore supported on fixed total-energy shells. Moreover, the unweighted
amplitudes belong to one common unitary acting within each shell. These two properties provide the additional
structure specific to thermal operations.

Let the bath Hamiltonian be
\begin{equation}
H_B
=
\sum_\mu
\epsilon_\mu\ket{\mu}\!\bra{\mu},
\qquad
\tau_B
=
\sum_\mu
q_\mu\ket{\mu}\!\bra{\mu},
\end{equation}
and let $U$ satisfy $[U,H_S+H_B]=0$. We write
\begin{equation}
U_{r\nu,i\mu}
:=
\langle r,\nu|U|i,\mu\rangle.
\end{equation}
Energy preservation implies
\begin{equation}
U_{r\nu,i\mu}
=
0
\quad\text{unless}\quad
E_i+\epsilon_\mu
=
E_r+\epsilon_\nu.
\end{equation}

\begin{definition}[Shell-amplitude vectors]
For every system transition $i\to r$, define
\begin{equation}
\ket{w_{r|i}}
:=
\bigoplus_{\mu,\nu:\,E_i+\epsilon_\mu=E_r+\epsilon_\nu}
\sqrt{q_\mu}\,
U_{r\nu,i\mu}
\ket{\mu,\nu}.
\label{eq:shellAmplitudeVector}
\end{equation}
All vectors $\ket{w_{r|i}}$ are regarded as elements of one common auxiliary Hilbert space spanned by the
bath input-output labels $\ket{\mu,\nu}$, with incompatible components set to zero. Inner products between
different shell-amplitude vectors are therefore well defined.
\end{definition}

\begin{theorem}[Energy-resolved thermal-operation representation]
\label{thm:TOinnerproducts}
For every thermal operation generated by $U$ and $\tau_B$, the population transition probabilities satisfy
\begin{equation}
p(r|i)
=
\|w_{r|i}\|^2,
\label{eq:transitionProbabilityShell}
\end{equation}
while, whenever $E_i-E_j=E_r-E_s$,
\begin{equation}
\Phi(\ket{i}\!\bra{j})_{rs}
=
\langle w_{s|j},w_{r|i}\rangle.
\label{eq:coherenceTransferInnerProduct}
\end{equation}
Consequently,
\begin{equation}
|\Phi(\ket{i}\!\bra{j})_{rs}|^2
\le
p(r|i)p(s|j).
\label{eq:elementaryTOCS}
\end{equation}
If $p(r|i)p(s|j)>0$, equality in Eq.~\eqref{eq:elementaryTOCS} holds if and only if
\begin{equation}
w_{r|i}
\parallel
w_{s|j}.
\label{eq:shellCollinearity}
\end{equation}
\end{theorem}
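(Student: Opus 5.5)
The plan is to reduce the statement to Theorem~\ref{thm:krausGram}, applied to the thermal Kraus family \eqref{eq:thermalKrausOperators}. Index the Kraus operators by the pair $a=(\mu,\nu)$, with $K_{\nu\mu}=\sqrt{q_\mu}\,{}_B\langle\nu|U|\mu\rangle_B$. First I will check that these operators indeed represent $\Lambda$. Expanding $\tau_B=\sum_\mu q_\mu\ket\mu\!\bra\mu$ and taking the partial trace in the basis $\{\ket\nu\}$ gives
\begin{equation}
\Lambda(X)=\sum_{\mu,\nu}K_{\nu\mu}XK_{\nu\mu}^\dagger .
\end{equation}
The generic vectors of Eq.~\eqref{eq:genericKrausVector} are then
\begin{equation}
\ket{v_{r|i}}=\sum_{\mu,\nu}\sqrt{q_\mu}\,U_{r\nu,i\mu}\ket{\mu,\nu}.
\end{equation}

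Second, I will identify $v_{r|i}$ with $w_{r|i}$. Energy preservation gives $U_{r\nu,i\mu}=0$ unless $E_i+\epsilon_\mu=E_r+\epsilon_\nu$. Every component of $v_{r|i}$ outside the shell set of Eq.~\eqref{eq:shellAmplitudeVector} therefore vanishes. Because the $w$'s are embedded in the same space spanned by $\ket{\mu,\nu}$, with incompatible components set to zero, we get $v_{r|i}=w_{r|i}$ as vectors in that common space. Equation~\eqref{eq:transitionProbabilityShell} and the Gram identity $\Phi(\ket i\!\bra j)_{rs}=\langle w_{s|j},w_{r|i}\rangle$ then follow directly from Eq.~\eqref{eq:genericKrausGram}. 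This holds for all labels, not only those with $E_i-E_j=E_r-E_s$.

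It is worth also checking the covariance restriction. When $E_i-E_j\neq E_r-E_s$, the supports of $w_{r|i}$ and $w_{s|j}$ are disjoint. A common index $(\mu,\nu)$ would require both $\epsilon_\mu-\epsilon_\nu=E_r-E_i$ and $\epsilon_\mu-\epsilon_\nu=E_s-E_j$, so the inner product is zero, consistent with covariance. The hypothesis in the statement is therefore only where the identity is nontrivial.

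Finally, the inequality \eqref{eq:elementaryTOCS} and the equality characterization are Cauchy--Schwarz in the common auxiliary space, exactly as in Theorem~\ref{thm:krausGram}. When $p(r|i)p(s|j)>0$, both vectors are nonzero, so equality holds iff they are complex-collinear. The main point needing care is the bookkeeping in the second step, where the shell-restricted direct sum must be shown to agree with the full Kraus vector. This is handled by the zero-padding convention, so I expect no real obstacle.
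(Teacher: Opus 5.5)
Your proposal is correct and follows essentially the same route as the paper. The paper computes $\Phi(\ket{i}\!\bra{j})_{rs}=\sum_{\mu,\nu}q_\mu U_{r\nu,i\mu}\overline{U_{s\nu,j\mu}}$ directly, keeps only shell-compatible terms, and applies Cauchy--Schwarz; this is exactly the Kraus--Gram identity of Theorem~\ref{thm:krausGram} for the thermal Kraus operators once you identify $v_{r|i}=w_{r|i}$ through the zero-padding convention, and your observation that both sides vanish when $E_i-E_j\neq E_r-E_s$ matches the paper's remark that the two shell conditions are compatible only under that gap condition.
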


\begin{proof}
For an input population $\ket{i}\!\bra{i}$, the probability of obtaining the output basis state $\ket r$ is
\begin{equation}
p(r|i)
=
\sum_{\mu,\nu}
q_\mu
|U_{r\nu,i\mu}|^2.
\end{equation}
Only terms satisfying
$E_i+\epsilon_\mu=E_r+\epsilon_\nu$
contribute. Hence
$p(r|i)=\|w_{r|i}\|^2$.

For the off-diagonal operator $\ket{i}\!\bra{j}$,
\begin{align}
\Phi(\ket{i}\!\bra{j})_{rs}
&=
\sum_{\mu,\nu}
q_\mu
U_{r\nu,i\mu}
\overline{U_{s\nu,j\mu}}.
\end{align}
A nonzero term must satisfy
\begin{equation}
E_i+\epsilon_\mu=E_r+\epsilon_\nu,
\qquad
E_j+\epsilon_\mu=E_s+\epsilon_\nu.
\end{equation}
These conditions are compatible if and only if
\begin{equation}
E_i-E_j
=
E_r-E_s.
\end{equation}
In that case,
\begin{equation}
\Phi(\ket{i}\!\bra{j})_{rs}
=
\langle w_{s|j},w_{r|i}\rangle.
\end{equation}
Cauchy--Schwarz gives
\begin{equation}
|\langle w_{s|j},w_{r|i}\rangle|^2
\le
\|w_{s|j}\|^2
\|w_{r|i}\|^2,
\end{equation}
and Eq.~\eqref{eq:transitionProbabilityShell} yields
Eq.~\eqref{eq:elementaryTOCS}. If both vectors are nonzero, equality holds if and only if they are linearly
dependent.
\end{proof}

\begin{remark}[Dilation independence of the equality conditions]
\label{rem:dilationIndependence}
The vectors $w_{r|i}$ depend on the chosen dilation $(H_B,\tau_B,U)$, but their norms and the relevant inner
products are fixed by the channel through
Eqs.~\eqref{eq:transitionProbabilityShell} and~\eqref{eq:coherenceTransferInnerProduct}. Hence saturation of
Eq.~\eqref{eq:elementaryTOCS} is a property of the channel itself. If
$p(r|i)p(s|j)>0$ and the bound is saturated, the corresponding shell-amplitude vectors are collinear in
every energy-preserving dilation of that channel.
\end{remark}

\begin{corollary}[Gibbs-stochasticity of the shell transition matrix]
\label{cor:shellGibbsStochastic}
The transition matrix
$p(r|i)=\|w_{r|i}\|^2$
satisfies
\begin{equation}
\sum_r p(r|i)
=
1
\qquad
\text{for every }i,
\label{eq:stochasticity}
\end{equation}
and
\begin{equation}
\sum_i
p(r|i)\gamma_i
=
\gamma_r
\qquad
\text{for every }r,
\label{eq:gibbsStochasticity}
\end{equation}
where $\gamma_i$ denotes the Gibbs weight of the basis state $\ket{i}$.
\end{corollary}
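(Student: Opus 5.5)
The plan is to derive both identities directly from the Gram conservation laws of Corollary~\ref{cor:genericGramConservation}, specialized to diagonal entries and written in the shell-amplitude language. Theorem~\ref{thm:TOinnerproducts} already gives $p(r|i)=\|w_{r|i}\|^2$ and $\bra r\Phi(\ket i\!\bra j)\ket s=\langle w_{s|j},w_{r|i}\rangle$. So the only remaining input is that a thermal operation is CPTP and Gibbs preserving; both facts were recorded in Sec.~\ref{sec:framework}.

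For stochasticity, Eq.~\eqref{eq:stochasticity}, I will set $j=i$ in Eq.~\eqref{eq:genericTPGram}. This gives $\sum_r\|v_{r|i}\|^2=1$, and since $\|v_{r|i}\|^2=p(r|i)=\|w_{r|i}\|^2$, the claim follows. As a self-contained check in the dilation, I will also write
\begin{equation}
\sum_r p(r|i)=\sum_\mu q_\mu\sum_{r,\nu}|U_{r\nu,i\mu}|^2=\sum_\mu q_\mu=1 .
\end{equation}
The middle step uses that the columns of $U$ are unit vectors, and the last step uses normalization of $\tau_B$.

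For Gibbs-stochasticity, Eq.~\eqref{eq:gibbsStochasticity}, I will take $\gamma=\gamma_S$, which is diagonal in the energy eigenbasis, and apply Eq.~\eqref{eq:genericGPGram} with $s=r$. This gives $\sum_i\gamma_i\|v_{r|i}\|^2=\gamma_r$, which is the statement once $\|v_{r|i}\|^2$ is identified with $p(r|i)$.

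Alternatively, I can verify the Gibbs identity directly from energy conservation. Here $\gamma_i q_\mu\propto e^{-\beta(E_i+\epsilon_\mu)}$, and on the support of $U_{r\nu,i\mu}$ this equals $\gamma_r q_\nu$ up to the same normalization. Then
\begin{equation}
\sum_i\gamma_i p(r|i)=\sum_{i,\mu,\nu}\gamma_r q_\nu|U_{r\nu,i\mu}|^2=\gamma_r\sum_\nu q_\nu\sum_{i,\mu}|U_{r\nu,i\mu}|^2=\gamma_r ,
\end{equation}
where the last step uses that the rows of $U$ are unit vectors.

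The only point that needs care is this weight exchange. One must check that the normalization constants cancel, i.e. $Z_SZ_B$ appears on both sides. One must also check that the shell restriction in Eq.~\eqref{eq:shellAmplitudeVector} loses no terms, which holds because entries of $U$ outside the shells vanish anyway. No other obstacle is expected.
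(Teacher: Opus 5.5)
Your proposal is correct and is essentially the paper's argument. The paper applies trace preservation to $\ket{i}\!\bra{i}$ and Gibbs preservation to the $(r,r)$ entry of $\Phi(\gamma)$, which is exactly what your specialization of Corollary~\ref{cor:genericGramConservation} to $j=i$, $s=r$ does, and your extra dilation-level check (column and row unitarity of $U$ together with the shell identity $\gamma_i q_\mu=\gamma_r q_\nu$) is also valid, with the small bonus of deriving Gibbs preservation from energy conservation rather than assuming it.
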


\begin{proof}
Trace preservation applied to $\ket{i}\!\bra{i}$ gives
\begin{equation}
1
=
\Tr\Phi(\ket{i}\!\bra{i})
=
\sum_r p(r|i),
\end{equation}
which proves Eq.~\eqref{eq:stochasticity}. Gibbs preservation gives
\begin{equation}
\gamma_r
=
\bra r\Phi(\gamma)\ket r
=
\sum_i
p(r|i)\gamma_i,
\end{equation}
and hence Eq.~\eqref{eq:gibbsStochasticity}.
\end{proof}

\subsection{Gram conservation in the thermal shell representation}
\label{subsec:shell-gram-conservation}

Trace and Gibbs preservation also impose conservation laws directly on the shell-amplitude vectors. These are
the thermal-shell form of Corollary~\ref{cor:genericGramConservation}; unlike the shell-support and common
unitary conditions, they follow from channel-level trace and Gibbs preservation alone.

\begin{lemma}[Thermal representation of Gram conservation]
\label{lem:shellGramConservation}
For every thermal operation and every pair of input basis labels $i,j$,
\begin{equation}
\sum_r
\langle w_{r|j},w_{r|i}\rangle
=
\delta_{ij}.
\label{eq:TPshellGram}
\end{equation}
Moreover, for every pair of output basis labels $r,s$,
\begin{equation}
\sum_i
\gamma_i
\langle w_{s|i},w_{r|i}\rangle
=
\gamma_r\delta_{rs}.
\label{eq:GPshellGram}
\end{equation}
Here $\gamma_i$ denotes the Gibbs weight of the energy eigenbasis state $\ket{i}$; within a degenerate
eigenspace these weights are equal.
\end{lemma}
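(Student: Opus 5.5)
Our plan is to reduce both identities to Corollary~\ref{cor:genericGramConservation} by writing the shell-amplitude vectors $w_{r|i}$ as Kraus vectors $v_{r|i}$ for the thermal Kraus family \eqref{eq:thermalKrausOperators}. We index the Kraus operators by pairs $a=(\mu,\nu)$ and use $K_{\nu\mu}=\sqrt{q_\mu}\,{}_B\!\langle\nu|U|\mu\rangle_B$. Then Eq.~\eqref{eq:genericKrausVector} gives
\begin{equation}
v_{r|i}=\sum_{\mu,\nu}\sqrt{q_\mu}\,U_{r\nu,i\mu}\ket{\mu,\nu}.
\end{equation}
By energy conservation \eqref{eq:shellSupport}, only components with $E_i+\epsilon_\mu=E_r+\epsilon_\nu$ can be nonzero. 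Since incompatible components of $w_{r|i}$ are set to zero in the common auxiliary space, $v_{r|i}=w_{r|i}$ componentwise, for all $r,i$. This holds without requiring $E_i-E_j=E_r-E_s$: the inner product $\langle w_{s|j},w_{r|i}\rangle$ simply equals $\bra r\Phi(\ket i\!\bra j)\ket s$. That matrix element vanishes automatically when the shell conditions are incompatible, by covariance or directly because the supports are disjoint.

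With this identification, Eq.~\eqref{eq:TPshellGram} is Eq.~\eqref{eq:genericTPGram}, because a thermal operation is trace preserving. For completeness we also verify this directly:
\begin{equation}
\sum_r\langle w_{r|j},w_{r|i}\rangle=\sum_\mu q_\mu\,\bra{j,\mu}U^\dagger U\ket{i,\mu}=\delta_{ij}.
\end{equation}
This uses unitarity and $\sum_\mu q_\mu=1$. Similarly, Eq.~\eqref{eq:GPshellGram} is Eq.~\eqref{eq:genericGPGram} applied to $\gamma=\gamma_S$. Here we use that $\gamma_S$ is diagonal in the energy eigenbasis, with weights $\gamma_i=e^{-\beta E_i}/Z$ that are equal inside each degenerate eigenspace, and that thermal operations are Gibbs preserving. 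Explicitly, $\sum_i\gamma_i\langle w_{s|i},w_{r|i}\rangle=\bra r\Lambda(\gamma_S)\ket s=\gamma_r\delta_{rs}$.

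The only step requiring care is the identification $v_{r|i}=w_{r|i}$ as vectors in the same space for all index pairs, including pairs lying on different shells. We handle it by noting that both are defined on the full label space $\{\ket{\mu,\nu}\}$ and that the entries of $U$ vanish off-shell. After that, both statements follow from the generic corollary.
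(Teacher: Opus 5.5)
Your proof is correct and is essentially the paper's argument: the paper proves the trace identity by the same unitarity computation $\sum_\mu q_\mu (U^\dagger U)_{j\mu,i\mu}=\delta_{ij}$, and the Gibbs identity by expanding $\Phi(\gamma)_{rs}=\sum_i\gamma_i\langle w_{s|i},w_{r|i}\rangle$. Your explicit identification $v_{r|i}=w_{r|i}$ is slightly cleaner bookkeeping, because it makes the inner-product formula valid for all index pairs and so removes the paper's case split into $E_i=E_j$ (handled via Theorem~\ref{thm:TOinnerproducts}) and $E_i\neq E_j$ (handled via covariance and disjoint shell support).
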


\begin{proof}
Trace preservation gives
\begin{equation}
\delta_{ij}
=
\Tr(\ket{i}\!\bra{j})
=
\Tr\Phi(\ket{i}\!\bra{j})
=
\sum_r
\Phi(\ket{i}\!\bra{j})_{rr}.
\end{equation}
If $E_i=E_j$, Theorem~\ref{thm:TOinnerproducts} gives
\begin{equation}
\Phi(\ket{i}\!\bra{j})_{rr}
=
\langle w_{r|j},w_{r|i}\rangle,
\end{equation}
and Eq.~\eqref{eq:TPshellGram} follows.

If $E_i\neq E_j$, both sides of Eq.~\eqref{eq:TPshellGram} vanish. Covariance implies
\begin{equation}
\Phi(\ket{i}\!\bra{j})_{rr}=0,
\end{equation}
while $w_{r|i}$ and $w_{r|j}$ have disjoint support, since their nonzero components require, respectively,
\begin{equation}
\epsilon_\mu-\epsilon_\nu
=
E_r-E_i,
\qquad
\epsilon_\mu-\epsilon_\nu
=
E_r-E_j.
\end{equation}
Equivalently, Eq.~\eqref{eq:TPshellGram} follows directly from unitarity:
\begin{align}
\sum_r
\langle w_{r|j},w_{r|i}\rangle
&=
\sum_\mu
q_\mu
\sum_{r,\nu}
\overline{U_{r\nu,j\mu}}
U_{r\nu,i\mu}
\nonumber\\
&=
\sum_\mu
q_\mu
(U^\dagger U)_{j\mu,i\mu}
\nonumber\\
&=
\delta_{ij}
\sum_\mu q_\mu
=
\delta_{ij}.
\end{align}

For the Gibbs identity,
\begin{align}
\Phi(\gamma)_{rs}
&=
\sum_i
\gamma_i
\Phi(\ket{i}\!\bra{i})_{rs}
\nonumber\\
&=
\sum_i
\gamma_i
\langle w_{s|i},w_{r|i}\rangle.
\end{align}
Since $\Phi(\gamma)=\gamma$,
\begin{equation}
\Phi(\gamma)_{rs}
=
\gamma_r\delta_{rs},
\end{equation}
which proves Eq.~\eqref{eq:GPshellGram}. When $E_r\neq E_s$, the individual inner products vanish because
$w_{r|i}$ and $w_{s|i}$ have disjoint shell support. The nontrivial off-diagonal content of
Eq.~\eqref{eq:GPshellGram} therefore appears for basis states belonging to the same degenerate energy
eigenspace.
\end{proof}

\begin{remark}
Equations~\eqref{eq:TPshellGram} and~\eqref{eq:GPshellGram} constrain complex sums of shell-amplitude inner
products. When selected Cauchy--Schwarz bounds are saturated, the moduli of these inner products are fixed,
while trace or Gibbs preservation constrains their relative phases. The polygon conditions derived in
Sec.~\ref{sec:to-nogo} follow directly from this observation. Thermal operations satisfy, in addition, the
shell-support condition~\eqref{eq:shellSupport} and the common shell-unitary constraints.
\end{remark}

\subsection{Mode-transport bound and equality conditions}

Let $\rho$ be an arbitrary input state. By linearity and
Theorem~\ref{thm:TOinnerproducts},
\begin{equation}
\sigma_{rs}
=
\sum_{i,j:\,E_i-E_j=E_r-E_s}
\rho_{ij}
\langle w_{s|j},w_{r|i}\rangle.
\label{eq:TOmodeTransportExpansion}
\end{equation}
Using Cauchy--Schwarz term by term gives
\begin{equation}
|\sigma_{rs}|
\le
\sum_{i,j:\,E_i-E_j=E_r-E_s}
|\rho_{ij}|
\sqrt{p(r|i)p(s|j)}.
\label{eq:TOmodeTransportBound}
\end{equation}

\begin{theorem}[Equality in the TO mode-transport bound]
\label{thm:TOmodeEquality}
For fixed $r,s$, define
\begin{equation}
S_{rs}
:=
\left\{
(i,j):
E_i-E_j=E_r-E_s,\;
\rho_{ij}\neq0,\;
p(r|i)p(s|j)>0
\right\}.
\label{eq:contributingSet}
\end{equation}
Equality in
\begin{equation}
|\sigma_{rs}|
\le
\sum_{(i,j)\in S_{rs}}
|\rho_{ij}|
\sqrt{p(r|i)p(s|j)}
\label{eq:restrictedModeBound}
\end{equation}
holds if and only if
\begin{enumerate}[label=(\roman*), leftmargin=2.2em]
\item for every $(i,j)\in S_{rs}$,
\begin{equation}
w_{r|i}
\parallel
w_{s|j};
\label{eq:pairwiseCollinearity}
\end{equation}
\item all nonzero complex numbers
\begin{equation}
\rho_{ij}
\langle w_{s|j},w_{r|i}\rangle,
\qquad
(i,j)\in S_{rs},
\label{eq:commonPhaseTerms}
\end{equation}
have the same phase.
\end{enumerate}
Terms outside $S_{rs}$ vanish and impose no equality condition.
\end{theorem}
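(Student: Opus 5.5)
The plan is to chain two inequalities and analyze when each is tight. First, I will restrict the expansion~\eqref{eq:TOmodeTransportExpansion} to $S_{rs}$. Any pair $(i,j)$ with $E_i-E_j=E_r-E_s$ but outside $S_{rs}$ has either $\rho_{ij}=0$ or $p(r|i)p(s|j)=0$. In the second case one of $w_{r|i}$, $w_{s|j}$ is zero by Eq.~\eqref{eq:transitionProbabilityShell}, so the inner product vanishes. This justifies the closing sentence of the statement and gives
\begin{equation}
\sigma_{rs}=\sum_{(i,j)\in S_{rs}} z_{ij},\qquad z_{ij}:=\rho_{ij}\langle w_{s|j},w_{r|i}\rangle .
\end{equation}

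Next I will write the bound as
\begin{equation}
|\sigma_{rs}|\le\sum_{(i,j)\in S_{rs}}|z_{ij}|\le\sum_{(i,j)\in S_{rs}}|\rho_{ij}|\sqrt{p(r|i)p(s|j)} .
\end{equation}
The first step is the triangle inequality in $\mathbb C$. The second applies Theorem~\ref{thm:TOinnerproducts}, i.e.\ Eq.~\eqref{eq:elementaryTOCS}, term by term. Equality in the outer bound holds iff both steps are equalities.

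The second step is a termwise sum of nonnegative slacks $|\rho_{ij}|\bigl(\sqrt{p(r|i)p(s|j)}-|\langle w_{s|j},w_{r|i}\rangle|\bigr)$. It is therefore tight iff every slack vanishes. Because $\rho_{ij}\neq0$ and $p(r|i)p(s|j)>0$ on $S_{rs}$, the equality clause of Theorem~\ref{thm:TOinnerproducts} turns this into condition (i).

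For the first step, I will use the standard equality case of the triangle inequality: $|\sum z_k|=\sum|z_k|$ iff all nonzero $z_k$ share a common phase. The short proof squares both sides and uses $\mathrm{Re}(z_k\bar z_l)\le|z_k||z_l|$. This gives condition (ii).

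For the converse, assume (i) and (ii). Condition (i) makes every slack zero, and (ii) makes the triangle inequality tight, so the two equalities combine. Under (i) every $z_{ij}$ with $(i,j)\in S_{rs}$ is actually nonzero, but (ii) is stated for nonzero terms, so no separate case is needed.

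I expect the main subtlety to be bookkeeping rather than analysis. The difficulty is making sure the "only if" direction genuinely needs tightness of the second step separately. I will handle this with the sandwich argument: if the outer inequality is an equality, both intermediate inequalities must be equalities, since each is individually $\le$.
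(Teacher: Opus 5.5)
Your proposal is correct and follows the paper's proof. Both arguments bound $|\sigma_{rs}|$ by the triangle inequality followed by termwise Cauchy--Schwarz from Theorem~\ref{thm:TOinnerproducts}, and both read off conditions (i) and (ii) as the respective equality cases. Your explicit reduction of the sum to $S_{rs}$ via $\|w_{r|i}\|^2=p(r|i)$, and your sandwich argument that equality in the outer bound forces equality in both intermediate steps, simply spell out what the paper's phrase ``combining these two equality conditions'' leaves implicit.
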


\begin{proof}
Equation~\eqref{eq:TOmodeTransportBound} follows from two inequalities. For each contributing pair,
\begin{equation}
|\langle w_{s|j},w_{r|i}\rangle|
\le
\sqrt{p(r|i)p(s|j)}.
\end{equation}
For $(i,j)\in S_{rs}$ both vectors are nonzero, so equality holds if and only if
$w_{r|i}\parallel w_{s|j}$.

The triangle inequality then gives
\begin{align}
\left|
\sum_{(i,j)\in S_{rs}}
\rho_{ij}
\langle w_{s|j},w_{r|i}\rangle
\right|
\le
\sum_{(i,j)\in S_{rs}}
\left|
\rho_{ij}
\langle w_{s|j},w_{r|i}\rangle
\right|.
\end{align}
Equality holds if and only if all nonzero summands have a common complex phase. Combining these two equality
conditions proves the claim.
\end{proof}

\subsection{Saturation graph}

The collinearity conditions can be conveniently represented by a graph.

\begin{definition}[Thermal-operation saturation graph]
\label{def:saturationGraph}
Fix a collection of elementary Cauchy--Schwarz bounds whose simultaneous saturation is required. The
corresponding saturation graph has the nonzero shell-amplitude vectors $w_{r|i}$ as vertices. Two vertices
$w_{r|i}$ and $w_{s|j}$ are connected by an edge whenever saturation of a selected bound requires
\begin{equation}
w_{r|i}
\parallel
w_{s|j}.
\end{equation}
Only vectors with nonzero norm are included. Hence collinearity is transitive along every connected path.
\end{definition}

The saturation graph records the Cauchy--Schwarz part of the equality conditions. The common-phase condition
in Theorem~\ref{thm:TOmodeEquality} remains an additional requirement.

\begin{corollary}[Connected components collapse to rays]
\label{cor:saturationGraphRays}
If a thermal operation jointly saturates all selected elementary coherence-transfer bounds, then all
shell-amplitude vectors in each connected component of the saturation graph belong to one complex ray.
\end{corollary}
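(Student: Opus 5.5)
The plan is to derive Corollary~\ref{cor:saturationGraphRays} directly from the equality clause of Theorem~\ref{thm:TOinnerproducts} and then propagate collinearity along paths in the saturation graph.

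\emph{Each edge is a collinearity.} Fix a thermal operation that jointly saturates all selected elementary bounds, and fix one energy-preserving dilation $(H_B,\tau_B,U)$. By Remark~\ref{rem:dilationIndependence}, the choice of dilation is immaterial. Consider an edge of the saturation graph, joining $w_{r|i}$ and $w_{s|j}$. By Definition~\ref{def:saturationGraph}, both vertices are nonzero, so $p(r|i)p(s|j)=\|w_{r|i}\|^2\|w_{s|j}\|^2>0$ by Eq.~\eqref{eq:transitionProbabilityShell}. Saturation of Eq.~\eqref{eq:elementaryTOCS} then gives $w_{r|i}\parallel w_{s|j}$ by the equality clause of Theorem~\ref{thm:TOinnerproducts}. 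In other words, there is a nonzero $c\in\mathbb C$ with $w_{s|j}=c\,w_{r|i}$. If the selected bounds are of mode-transport type instead, condition~(i) of Theorem~\ref{thm:TOmodeEquality} supplies the same pairwise collinearity.

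\emph{Collinearity propagates along paths.} Collinearity among nonzero vectors is an equivalence relation: it is reflexive, symmetric because $c\neq0$ can be inverted, and transitive because nonzero scalars compose. Let $x$ and $y$ lie in the same connected component, joined by a path $x=u_0,u_1,\dots,u_m=y$. Induction on $m$ gives $u_m=c_1\cdots c_m\,u_0$ with each $c_k\neq0$. Hence every vertex of the component lies in the complex ray $\mathbb C\,u_0$.

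\emph{Only real subtlety.} The argument needs every intermediate vertex on a path to be nonzero; otherwise $0\parallel v$ holds trivially and transitivity fails. This is guaranteed by the convention in Definition~\ref{def:saturationGraph} that only nonzero vectors are vertices. We also need all $w$'s to live in one common auxiliary space, which the definition of shell-amplitude vectors provides. I do not expect any further obstacle.
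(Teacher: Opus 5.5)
Your proposal is correct and follows essentially the same route as the paper: each edge of the saturation graph yields collinearity via the equality clause of Theorem~\ref{thm:TOinnerproducts}, and because all vertices are nonzero, collinearity is transitive along paths. Your explicit tracking of the nonzero scalars along a path, and your remark that the nonzero-vertex convention is what makes transitivity valid, are just a more detailed version of the paper's two-line proof.
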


\begin{proof}
Every edge imposes collinearity by Theorem~\ref{thm:TOinnerproducts}. Since all vertices are nonzero,
collinearity is transitive along paths. Hence all vectors in the same connected component are mutually
collinear.
\end{proof}

\begin{proposition}[Collinearity--orthogonality obstruction]
\label{prop:TOorthogonalityObstruction}
Fix a state-transition task and a collection of coherence-transfer bounds. Suppose joint saturation under a
thermal operation would require
\begin{equation}
w_{r|i}
\parallel
w_{s|j}
\end{equation}
for two nonzero shell-amplitude vectors, while every thermal-operation dilation realizing the required
diagonal transition data satisfies
\begin{equation}
\langle w_{s|j},w_{r|i}\rangle
=
0.
\end{equation}
Then no thermal operation can jointly saturate the selected bounds.
\end{proposition}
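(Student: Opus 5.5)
The argument is by contradiction and rests on Theorem~\ref{thm:TOinnerproducts} and Remark~\ref{rem:dilationIndependence}. Suppose some thermal operation $\Phi$, realizing the prescribed diagonal transition data, jointly saturates the selected bounds. Fix any energy-preserving dilation $(H_B,\tau_B,U)$ of $\Phi$ and form the shell-amplitude vectors $w_{r|i}$ and $w_{s|j}$ of Eq.~\eqref{eq:shellAmplitudeVector}. By hypothesis both vectors are nonzero, so $p(r|i)=\|w_{r|i}\|^2>0$ and $p(s|j)=\|w_{s|j}\|^2>0$. Saturation of the relevant elementary bound then forces $w_{r|i}\parallel w_{s|j}$, by Theorem~\ref{thm:TOinnerproducts}, or, when the requirement comes from a mode-transport bound, by Theorem~\ref{thm:TOmodeEquality}(i). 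By Remark~\ref{rem:dilationIndependence}, this collinearity holds in \emph{every} energy-preserving dilation of $\Phi$, and in particular in the one we fixed.

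Collinearity gives a nonzero scalar $c\in\mathbb C$ with $w_{r|i}=c\,w_{s|j}$. Hence
\begin{equation}
|\langle w_{s|j},w_{r|i}\rangle| = |c|\,\|w_{s|j}\|^2=\|w_{r|i}\|\,\|w_{s|j}\|=\sqrt{p(r|i)p(s|j)}>0 .
\end{equation}
The second hypothesis, however, states that every dilation realizing the required diagonal data has $\langle w_{s|j},w_{r|i}\rangle=0$. Our fixed dilation realizes these data, because $\Phi$ does. This is a contradiction, so no such thermal operation exists.

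The one step needing care is the quantifier bookkeeping. The two nonzero vectors in the hypothesis must be nonzero for every admissible $\Phi$; this is automatic if the required diagonal data fix $p(r|i)$ and $p(s|j)$ to be positive. Moreover, the orthogonality must hold for the same dilation in which collinearity is enforced. Remark~\ref{rem:dilationIndependence} resolves this: both collinearity and the value of the inner product, which equals the channel entry $\Phi(\ket{i}\!\bra{j})_{rs}$ when $E_i-E_j=E_r-E_s$, are channel-level properties. In the remaining case $E_i-E_j\neq E_r-E_s$, the supports of $w_{r|i}$ and $w_{s|j}$ are disjoint, as in the proof of Lemma~\ref{lem:shellGramConservation}. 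The inner product therefore vanishes identically and the same contradiction applies. Beyond this there is no real obstacle; the proposition is essentially a direct consequence of the equality case of Cauchy--Schwarz.
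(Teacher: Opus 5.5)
Your proof is correct and follows the paper's argument: two nonzero vectors that are collinear have an inner product of modulus $\|w_{r|i}\|\,\|w_{s|j}\|>0$, so they cannot also be orthogonal. The paper's proof is just this one-line observation; your quantifier bookkeeping (fixing a dilation, invoking dilation independence, and treating the mismatched-gap case) is sound but goes beyond what the paper needs, since the paper takes both hypotheses as given.
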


\begin{proof}
Two nonzero vectors cannot be simultaneously collinear and orthogonal. The two requirements are therefore
incompatible.
\end{proof}

\begin{remark}
To apply Proposition~\ref{prop:TOorthogonalityObstruction}, the orthogonality condition must be derived from
the common energy-preserving unitary
\begin{equation}
U
=
\bigoplus_E U_E .
\end{equation}
For a fixed finite bath this gives a finite set of polynomial unitarity constraints. A bath-independent
obstruction requires an argument valid for arbitrary finite bath spectra and shell dimensions, or an
appropriate reduction to a finite family of baths. This is related to the problem of determining which bath
Hamiltonians are sufficient to generate a given family of thermal operations
\cite{VomEnde2022BathHamiltonians,
PerryCwiklinskiAndersHorodeckiOppenheim2018ImplementableTO,
LostaglioAlhambraPerry2018ElementaryTO}.
\end{remark}

Joint saturation is nevertheless possible in special aligned cases. For the no-change task
$\sigma=\rho$, the identity channel is a thermal operation and preserves every coherence block. More
generally, if a system unitary $V$ satisfies
\begin{equation}
[V,H_S]
=
0,
\end{equation}
then
\begin{equation}
\Phi_V(\rho)
=
V\rho V^\dagger
\end{equation}
is a thermal operation which does not require a bath. Such a unitary acts independently inside degenerate
eigenspaces and preserves the operator norm of every inter-eigenspace coherence block. Thus the compatibility
conditions derived above do not forbid joint saturation in general. They become restrictive for nontrivial
transformations in which several independently sharp coherence transfers have to be realized by the same
energy-preserving system--bath interaction.

\section{Universal polygon obstructions and their thermal-operation realization}
\label{sec:to-nogo}

The Gram conservation laws derived above impose phase constraints in addition to the local
Cauchy--Schwarz bounds. Trace preservation requires certain transfer amplitudes to sum to zero, and Gibbs
preservation gives an analogous weighted condition. Geometrically, the corresponding complex amplitudes must
form a closed polygon. This gives simple incompatibility conditions which hold already at the level of CPTP
or Gibbs-preserving maps and therefore also constrain thermal operations.

\subsection{Trace-preservation polygon obstruction}

Let $i\neq j$ be two basis states and let $\Phi$ be CPTP. Denote the corresponding transition probabilities
by
\begin{equation}
P_{ri}:=p(r|i),
\qquad
P_{rj}:=p(r|j),
\end{equation}
and define
\begin{equation}
a_r
:=
\sqrt{P_{ri}P_{rj}}.
\label{eq:polygonLengths}
\end{equation}
For every output basis state $r$, Theorem~\ref{thm:krausGram} gives
\begin{equation}
\left|
\Phi(\ket{i}\!\bra{j})_{rr}
\right|
=
\left|
\langle v_{r|j},v_{r|i}\rangle
\right|
\le
\sqrt{P_{ri}P_{rj}}
=
a_r.
\label{eq:zeroModeElementaryBound}
\end{equation}

\begin{theorem}[Trace-polygon obstruction for CPTP maps]
\label{thm:polygonNoGo}
Let $\Phi$ be CPTP. Fix the transition probabilities $P_{ri}=p(r|i)$ and
$P_{rj}=p(r|j)$ for two distinct basis states $i\neq j$. If there exists an output basis state $r_0$ such that
\begin{equation}
a_{r_0}
>
\sum_{r\neq r_0}a_r,
\label{eq:polygonViolation}
\end{equation}
then no CPTP map with these transition probabilities can simultaneously saturate all bounds in
Eq.~\eqref{eq:zeroModeElementaryBound}. In particular, no thermal operation can do so.
\end{theorem}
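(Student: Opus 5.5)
The plan is a closed-polygon argument built on Corollary~\ref{cor:genericGramConservation}. Because $\Phi$ is trace preserving and $i\neq j$, Eq.~\eqref{eq:genericTPGram} gives
\begin{equation}
\sum_r z_r = 0,
\qquad
z_r:=\Phi(\ket{i}\!\bra{j})_{rr}=\langle v_{r|j},v_{r|i}\rangle .
\end{equation}
This identity holds for every CPTP map. It therefore applies to any channel with the prescribed transition probabilities, whichever Kraus decomposition is used. Remark~\ref{rem:dilationIndependence} explains why the dilation does not matter: the $z_r$ are channel data.

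Next I will suppose, for contradiction, that all bounds in Eq.~\eqref{eq:zeroModeElementaryBound} are saturated. Then $|z_r|=a_r$ for every $r$. Isolating the $r_0$ term in the zero-sum relation and applying the triangle inequality gives
\begin{equation}
a_{r_0}=|z_{r_0}|=\Bigl|\sum_{r\neq r_0} z_r\Bigr|\le \sum_{r\neq r_0}|z_r|=\sum_{r\neq r_0}a_r ,
\end{equation}
which contradicts Eq.~\eqref{eq:polygonViolation}. Geometrically, the $z_r$ would form a closed polygon with side lengths $a_r$, but no closed polygon can have one side longer than the sum of the others.

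Two bookkeeping points need care. First, terms with $a_r=0$ force $z_r=0$ and contribute nothing, so they need no separate treatment. Second, the statement is about maps with fixed $P_{ri},P_{rj}$, so the lengths $a_r$ are determined by the hypotheses and the argument never requires optimizing over $\Phi$. There is no real obstacle; the only thing to check is that the conservation law used is the off-diagonal one ($i\neq j$), not the diagonal normalization.

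For the final clause, every thermal operation is CPTP, so it is covered by the general statement. One can also verify this directly in the thermal representation: Eq.~\eqref{eq:TPshellGram} of Lemma~\ref{lem:shellGramConservation} gives the same zero-sum relation for the shell-amplitude inner products.
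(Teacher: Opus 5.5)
Your proposal is correct and matches the paper's proof: trace preservation with $i\neq j$ gives $\sum_r z_r=\Tr\Phi(\ket{i}\!\bra{j})=0$, and saturation $|z_r|=a_r$ with the triangle inequality on the $r_0$ term contradicts Eq.~\eqref{eq:polygonViolation}. The thermal-operation clause follows because every thermal operation is CPTP.
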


\begin{proof}
Define
\begin{equation}
z_r
:=
\Phi(\ket{i}\!\bra{j})_{rr}
=
\langle v_{r|j},v_{r|i}\rangle.
\end{equation}
Since $i\neq j$, trace preservation gives
\begin{equation}
\sum_r z_r
=
\Tr\Phi(\ket{i}\!\bra{j})
=
\Tr(\ket{i}\!\bra{j})
=
0.
\label{eq:zeroSumZr}
\end{equation}
Suppose that all bounds in Eq.~\eqref{eq:zeroModeElementaryBound} are saturated. Then
\begin{equation}
|z_r|
=
a_r
\qquad
\text{for every }r.
\end{equation}
Using Eq.~\eqref{eq:zeroSumZr},
\begin{align}
a_{r_0}
=
|z_{r_0}|
&=
\left|
\sum_{r\neq r_0}(-z_r)
\right|
\nonumber\\
&\le
\sum_{r\neq r_0}|z_r|
=
\sum_{r\neq r_0}a_r,
\end{align}
which contradicts Eq.~\eqref{eq:polygonViolation}. Hence the elementary bounds cannot all be saturated
simultaneously.
\end{proof}

The condition in Theorem~\ref{thm:polygonNoGo} is in fact the only obstruction to closing complex numbers
with prescribed moduli.

\begin{lemma}[Exact phase-closure criterion]
\label{lem:exactPolygon}
Let $a_1,\ldots,a_n$ be nonnegative numbers. Complex numbers $z_1,\ldots,z_n$ satisfying
\begin{equation}
|z_k|=a_k
\qquad\text{and}\qquad
\sum_{k=1}^n z_k=0
\end{equation}
exist if and only if
\begin{equation}
2\max_k a_k
\le
\sum_{k=1}^n a_k.
\label{eq:exactPolygonCriterion}
\end{equation}
\end{lemma}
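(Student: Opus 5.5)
Necessity is exactly the triangle-inequality argument already used in the proof of Theorem~\ref{thm:polygonNoGo}. Suppose $|z_k|=a_k$ and $\sum_k z_k=0$, and let $m$ be an index with $a_m=\max_k a_k$. Then
\begin{equation}
a_m=|z_m|=\Bigl|\sum_{k\neq m}z_k\Bigr|\le\sum_{k\neq m}a_k .
\end{equation}
Adding $a_m$ to both sides gives Eq.~\eqref{eq:exactPolygonCriterion}.

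For sufficiency I will give a constructive argument by induction on $n$, which avoids any appeal to Euclidean polygon geometry.

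\textbf{Base cases.} If all $a_k=0$, take $z_k=0$. If $n=1$, the criterion forces $a_1=0$. If $n=2$, the criterion forces $a_1=a_2$, and we may take $z_1=a_1$, $z_2=-a_1$.

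\textbf{Reduction to three sides.} For $n\ge3$, order the moduli so that $a_1\ge a_2\ge\dots\ge a_n$. I will try to find a single number $c\ge0$ that replaces the sum $z_2+\dots+z_n$. It must satisfy two requirements:
\begin{itemize}
\item[(a)] the triple $(a_1,c,0)$, or more usefully the pair $(a_1,c)$, closes, which means $c=a_1$;
\item[(b)] $c$ is achievable as $\bigl|\sum_{k\ge2}z_k\bigr|$ for some choice of phases.
\end{itemize}
So the key step is an auxiliary claim: the set of attainable values $\bigl|\sum_{k=2}^n z_k\bigr|$ with $|z_k|=a_k$ is the full interval $[\,\max(0,\,a_2-\sum_{k\ge3}a_k),\ \sum_{k\ge2}a_k\,]$. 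The upper end is reached with all phases equal, and the lower end with $z_2$ opposing all the others, which are aligned with each other. The map from phases to $\bigl|\sum_k z_k\bigr|$ is continuous on the connected torus $(S^1)^{n-1}$, so its image is an interval. That image therefore contains every value between the minimum and maximum, and the minimum is at most the stated lower endpoint.

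\textbf{Verifying $a_1$ lies in this interval.} The upper bound $a_1\le\sum_{k\ge2}a_k$ is exactly the hypothesis. For the lower bound, $a_2-\sum_{k\ge3}a_k\le a_2\le a_1$ holds because $a_1$ is the maximum. Choose phases with $w:=\sum_{k\ge2}z_k$ satisfying $|w|=a_1$, and set $z_1:=-w$. Then $|z_1|=a_1$ and $\sum_k z_k=0$. This makes the induction on $n$ unnecessary, since connectedness alone does the work.

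The only delicate point, and the place I expect the main obstacle, is the claim that the minimum of $\bigl|\sum_{k\ge2}z_k\bigr|$ does not exceed $\max(0,a_2-\sum_{k\ge3}a_k)$. When $a_2\ge\sum_{k\ge3}a_k$, the explicit opposing configuration attains this value, so there is nothing more to show. Otherwise we need the value $0$ to be attained; but in that case it suffices to note that $0\le a_1$ automatically. What we actually need is only that the interval of attainable values contains $a_1$. For that, it is enough to exhibit one configuration with modulus $\le a_1$ and one with modulus $\ge a_1$, and then apply the intermediate value theorem along a path in the torus. The aligned configuration gives modulus $\sum_{k\ge2}a_k\ge a_1$. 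The configuration with $z_2$ opposite to the aligned remainder gives modulus $|a_2-\sum_{k\ge3}a_k|$. This is $\le a_1$ in both cases: if $a_2$ dominates, it is $\le a_2\le a_1$; otherwise it equals $\sum_{k\ge3}a_k-a_2\le\sum_{k\ge3}a_k$, which is not obviously $\le a_1$. To close that case I would instead use a greedy construction, placing vectors head-to-tail so that the partial sums stay within distance $a_1$ of the origin, or simply invoke induction to close the polygon $(a_2,\dots,a_n)$ partially. Settling this case cleanly is where I would spend the effort.
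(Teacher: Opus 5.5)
Your necessity argument is correct. Your sufficiency strategy is the same reduction the paper uses: show that $a_1$ is an attainable value of $|w|$ with $w=\sum_{k\ge2}z_k$, then set $z_1=-w$. However, the sufficiency proof has a genuine gap, exactly where you suspected. You claim the lower endpoint $\max\{0,a_2-\sum_{k\ge3}a_k\}$ is attained by putting $z_2$ opposite the aligned remainder. That holds only when $a_2\ge\sum_{k\ge3}a_k$. When $a_2<\sum_{k\ge3}a_k$, that configuration has modulus $\sum_{k\ge3}a_k-a_2$, which can exceed $a_1$: for $n=6$ and all $a_k=1$ it equals $3>1=a_1$. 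You then have no configuration with $|w|\le a_1$, so the intermediate-value argument on the torus has nothing to interpolate between. Connectedness alone does not ``do the work'': it only supplies values between ones you have actually realized. The assertion that the minimum is at most the stated lower endpoint is therefore unproved precisely when that endpoint is $0$.

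The missing step is short, and your greedy idea supplies it. Put $z_2=a_2$. For $k=3,\dots,n$, choose $z_k$ antiparallel to the current partial sum $s_{k-1}=z_2+\dots+z_{k-1}$ (in any direction if $s_{k-1}=0$). Then $|s_k|=\bigl|\,|s_{k-1}|-a_k\bigr|\le\max\{|s_{k-1}|,a_k\}$, so inductively $|s_k|\le\max_{2\le j\le k}a_j\le a_1$. This gives a configuration with $|w|\le a_1$. The aligned configuration gives $|w|=\sum_{k\ge2}a_k\ge a_1$ by hypothesis. Your continuity argument on the connected torus then produces $|w|=a_1$.

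The paper instead proves the full interval statement by adding one vector at a time. If the attainable moduli form $[L,U]$, then after adding a vector of length $b$ they form $\bigcup_{x\in[L,U]}[|x-b|,x+b]=[\max\{0,L-b,b-U\},U+b]$. The ordering $a_k\le a_2$ makes the $b-U$ term irrelevant. Either route closes the gap; without one of them, the ``if'' direction is not established.
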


\begin{proof}
For necessity, let $a_m=\max_k a_k$. If $\sum_k z_k=0$, then
\begin{equation}
z_m
=
-\sum_{k\neq m}z_k,
\end{equation}
and therefore
\begin{equation}
a_m
=
|z_m|
\le
\sum_{k\neq m}|z_k|
=
\sum_{k\neq m}a_k.
\end{equation}
This is equivalent to Eq.~\eqref{eq:exactPolygonCriterion}.

For sufficiency, reorder the lengths so that
\begin{equation}
a_1\ge a_2\ge\cdots\ge a_n.
\end{equation}
Consider vectors in the complex plane with lengths $a_2,\ldots,a_n$. The set of possible magnitudes of their
sum is the interval
\begin{equation}
\left[
\max\left\{
0,\,
a_2-\sum_{k=3}^n a_k
\right\},
\,
\sum_{k=2}^n a_k
\right].
\label{eq:resultantInterval}
\end{equation}
To see this, begin with one vector and add the remaining vectors successively. If the magnitude of a partial
sum can take every value in an interval $[L,U]$, then after adding a vector of length $b$ the possible
magnitudes are
\begin{equation}
\bigcup_{x\in[L,U]}
[|x-b|,x+b],
\end{equation}
which is again an interval, with upper endpoint $U+b$ and lower endpoint
\begin{equation}
\max\{0,L-b,b-U\}.
\end{equation}
Iterating this construction gives Eq.~\eqref{eq:resultantInterval}.

Condition~\eqref{eq:exactPolygonCriterion} implies
\begin{equation}
a_1
\le
\sum_{k=2}^n a_k.
\end{equation}
Moreover,
\begin{equation}
a_1
\ge
a_2
\ge
a_2-\sum_{k=3}^n a_k,
\end{equation}
so $a_1$ lies in the interval~\eqref{eq:resultantInterval}. We may therefore choose complex numbers
$z_2,\ldots,z_n$ with $|z_k|=a_k$ such that
\begin{equation}
\left|
\sum_{k=2}^n z_k
\right|
=
a_1.
\end{equation}
Choosing
\begin{equation}
z_1
=
-\sum_{k=2}^n z_k
\end{equation}
then gives $|z_1|=a_1$ and $\sum_{k=1}^n z_k=0$.
\end{proof}

Lemma~\ref{lem:exactPolygon} characterizes closure of complex
amplitudes with prescribed moduli. A thermal-operation realization
must additionally satisfy the shell-support and common
energy-preserving unitary requirements.

The geometric meaning of Theorem~\ref{thm:polygonNoGo} is simple. Saturation of the local
Cauchy--Schwarz bounds fixes the lengths $a_r$ of the complex amplitudes $z_r$, while trace preservation
requires these amplitudes to form a closed polygon. If one length is larger than the sum of all the others,
no choice of phases can satisfy this condition.

\begin{remark}[Minimal illustration]
\label{rem:minimalPolygon}
Suppose that only two lengths, $a_{r_1}$ and $a_{r_2}$, are nonzero. The condition
\begin{equation}
z_{r_1}+z_{r_2}=0
\end{equation}
then implies
\begin{equation}
|z_{r_1}|
=
|z_{r_2}|.
\end{equation}
Joint saturation therefore requires
\begin{equation}
a_{r_1}
=
a_{r_2},
\end{equation}
or equivalently
\begin{equation}
P_{r_1 i}P_{r_1 j}
=
P_{r_2 i}P_{r_2 j}.
\end{equation}
Thus even in this minimal case the diagonal transition probabilities can exclude simultaneous saturation of
the two elementary zero-frequency bounds.
\end{remark}

\subsection{Gibbs-weighted polygon obstruction}

Gibbs preservation gives an analogous closure condition, now weighted by the Gibbs probabilities.

\begin{corollary}[Gibbs-weighted polygon obstruction]
\label{cor:GPpolygonNoGo}
Let $\Phi$ be completely positive and preserve the full-rank diagonal state
\begin{equation}
\gamma
=
\sum_i\gamma_i\ket i\!\bra i.
\end{equation}
For two output basis states $r\neq s$, define
\begin{equation}
q_i
:=
\gamma_i
\bra r\Phi(\ket i\!\bra i)\ket s,
\qquad
b_i
:=
\gamma_i
\sqrt{p(r|i)p(s|i)}.
\label{eq:GPpolygonLengths}
\end{equation}
Then
\begin{equation}
|q_i|
\le
b_i,
\qquad
\sum_i q_i
=
0.
\end{equation}
If there exists an input basis state $i_0$ such that
\begin{equation}
b_{i_0}
>
\sum_{i\neq i_0}b_i,
\label{eq:GPpolygonViolation}
\end{equation}
then no Gibbs-preserving completely positive map with these transition probabilities can simultaneously
saturate all elementary bounds
\begin{equation}
\left|
\Phi(\ket{i}\!\bra{i})_{rs}
\right|
=
\left|
\langle v_{s|i},v_{r|i}\rangle
\right|
\le
\sqrt{p(r|i)p(s|i)}
\label{eq:GPelementaryBounds}
\end{equation}
for all $i$. In particular, the same saturation pattern is impossible for every $\CGP$ map and every
thermal operation.
\end{corollary}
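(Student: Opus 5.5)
The plan mirrors the proof of Theorem~\ref{thm:polygonNoGo}, with the trace-preservation identity replaced by the Gibbs identity of Corollary~\ref{cor:genericGramConservation}. The argument has three steps: a Gram bound on each $q_i$, a closure relation $\sum_i q_i=0$, and a polygon contradiction.

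\emph{Step 1: the bound $|q_i|\le b_i$.} I would write $\Phi$ in Kraus form and form the vectors $v_{r|i}$ as in Theorem~\ref{thm:krausGram}. Equation~\eqref{eq:genericKrausGram} with $j=i$ gives
\begin{equation*}
\bra r\Phi(\ket i\!\bra i)\ket s=\langle v_{s|i},v_{r|i}\rangle .
\end{equation*}
Hence $q_i=\gamma_i\langle v_{s|i},v_{r|i}\rangle$. Since $\gamma_i>0$, Eq.~\eqref{eq:genericKrausCS} gives
\begin{equation*}
|q_i|\le\gamma_i\sqrt{p(r|i)p(s|i)}=b_i .
\end{equation*}
This is the bound in Eq.~\eqref{eq:GPelementaryBounds} multiplied by $\gamma_i$.

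\emph{Step 2: the closure relation.} I would invoke Eq.~\eqref{eq:genericGPGram} with $r\neq s$, which gives
\begin{equation*}
\sum_i q_i=\bra r\Phi(\gamma)\ket s=\gamma_r\delta_{rs}=0 .
\end{equation*}
This uses only complete positivity and $\Phi(\gamma)=\gamma$. Trace preservation is not needed.

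\emph{Step 3: the polygon contradiction.} Suppose all bounds in Eq.~\eqref{eq:GPelementaryBounds} are saturated. Multiplying by $\gamma_i>0$ then gives $|q_i|=b_i$ for every $i$. Isolating $q_{i_0}=-\sum_{i\neq i_0}q_i$ and applying the triangle inequality gives
\begin{equation*}
b_{i_0}\le\sum_{i\neq i_0}b_i ,
\end{equation*}
which contradicts Eq.~\eqref{eq:GPpolygonViolation}. Equivalently, Lemma~\ref{lem:exactPolygon} already rules out any phase assignment with these moduli.

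\emph{The final sentence.} Every $\CGP$ map is completely positive and preserves $\gamma_S$, which is diagonal and full rank, so the argument applies to it directly. Thermal operations are contained in $\CGP$ by Eq.~\eqref{eq:TOsubsetCGP}, so the conclusion also holds for them.

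I do not expect a real obstacle here. The one point to handle carefully is that the saturation hypothesis concerns the unweighted bounds. Translating it to $|q_i|=b_i$ needs $\gamma_i>0$, which is exactly where the full-rank assumption on $\gamma$ is used.
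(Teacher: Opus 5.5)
Your proposal is correct and follows the paper's proof essentially step for step: Gibbs preservation gives $\sum_i q_i=0$, the Kraus--Gram Cauchy--Schwarz bound of Theorem~\ref{thm:krausGram} gives $|q_i|\le b_i$, and saturation together with the triangle inequality (equivalently Lemma~\ref{lem:exactPolygon}) contradicts the violation condition. One small correction to your closing remark: passing from unweighted saturation to $|q_i|=b_i$ needs only $\gamma_i\ge0$, so full rank is not actually used there; if $\gamma_{i_0}=0$, the violation condition cannot hold in the first place.
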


\begin{proof}
For $r\neq s$, Gibbs preservation gives
\begin{align}
0
=
\bra r\Phi(\gamma)\ket s
&=
\sum_i
\gamma_i
\bra r\Phi(\ket i\!\bra i)\ket s
\nonumber\\
&=
\sum_i q_i.
\end{align}
By Theorem~\ref{thm:krausGram},
\begin{equation}
|q_i|
=
\gamma_i
\left|
\langle v_{s|i},v_{r|i}\rangle
\right|
\le
\gamma_i
\sqrt{p(r|i)p(s|i)}
=
b_i.
\end{equation}
If all bounds in Eq.~\eqref{eq:GPelementaryBounds} are saturated, then $|q_i|=b_i$ for every $i$. The
condition $\sum_iq_i=0$ therefore requires the lengths $b_i$ to satisfy the polygon condition of
Lemma~\ref{lem:exactPolygon}. Equation~\eqref{eq:GPpolygonViolation} violates this condition, and hence joint
saturation is impossible.
\end{proof}

\subsection{Covariance and degeneracy}

Theorem~\ref{thm:polygonNoGo} follows from trace preservation alone, while
Corollary~\ref{cor:GPpolygonNoGo} additionally uses Gibbs preservation. Hence both results apply to thermal
operations without using the more restrictive shell-support or shell-unitarity conditions.

Time-translation covariance determines when these polygon constraints can have nontrivial thermodynamic
content. The operator $\ket i\!\bra j$ has Bohr frequency $E_i-E_j$, whereas every diagonal output operator
has zero frequency. Consequently,
\begin{equation}
E_i\neq E_j
\quad\Longrightarrow\quad
\Phi(\ket i\!\bra j)_{rr}
=
0
\quad
\text{for every }r
\label{eq:covariantDiagonalConsequence}
\end{equation}
for every covariant map. A nontrivial covariant application of the trace-polygon condition therefore
requires
\begin{equation}
E_i=E_j,
\end{equation}
so that $\ket i\!\bra j$ is a zero-frequency coherence inside a degenerate eigenspace. Similarly, a diagonal
input can contribute to an off-diagonal output element $(r,s)$ only when
\begin{equation}
E_r=E_s.
\end{equation}
Energy-level degeneracy thus provides the natural setting in which the polygon constraints become
nontrivial under covariance. Unequal mixed degeneracies, however, are not required for either polygon
result.

The transition probabilities entering the polygon inequalities are assumed to be fixed. If the required
population transition is itself impossible under thermal operations, the transformation is already excluded
at the diagonal level. When the diagonal transition is admissible, the polygon condition gives an additional
restriction: the elementary coherence-transfer bounds associated with the same population dynamics may
still be impossible to saturate simultaneously.

\begin{remark}[Individual bound implied by polygon closure]
\label{rem:individualJointSharpness}
Let \(z_r:=\Phi(\ket{i}\!\bra{j})_{rr}\). Since
\(\sum_r z_r=0\) and \(|z_r|\le a_r\), every admissible
channel satisfies
\begin{equation}
|z_{r_0}|
\le
\min\left\{
a_{r_0},\,
\sum_{r\ne r_0}a_r
\right\}
\qquad\text{for every }r_0.
\end{equation}
Under strict polygon violation, the elementary bound corresponding
to the largest length is therefore individually unattainable.
The polygon result concerns saturation of the elementary
Cauchy--Schwarz bounds, whereas the optimization in
Sec.~\ref{sec:cgp-certificate} compares individual channel-class
optima with their joint attainability.
\end{remark}

\begin{remark}[Mixed degeneracy]
The trace-polygon condition has a nontrivial covariant application to zero-frequency coherences
$\ket{i}\!\bra{j}$ inside a degenerate energy eigenspace. Mixed-degenerate systems are particularly useful
because these zero-frequency coherences coexist with inter-eigenspace coherence sectors of different
dimensions. The polygon obstruction itself, however, does not depend on this dimensional mismatch.
\end{remark}

\section{Concrete mixed-degenerate CGP certificate} \label{sec:cgp-certificate} We now give a concrete mixed-degenerate example in which several individually optimal coherence transfers cannot be realized simultaneously. To obtain a tractable optimization problem, we work with the larger class $\CGP$ of covariant Gibbs-preserving channels. The corresponding optimization is a semidefinite program, and the incompatibility can be established rigorously by certified primal and dual witnesses. Since $\TO\subseteq\CGP$, the same triple of $\CGP$-optimal transfer values cannot be attained by a thermal operation. We consider three directional coherence-transfer objectives, chosen to be linear in the channel. This makes it possible to optimize each transfer separately and to compare the sum of the individual optima with the largest value attainable by one common channel.

\subsection{System, state, and target populations}

Consider the four-dimensional Hamiltonian \(H=\operatorname{diag}(0,1,1,3)\)
with degeneracies \((1,2,1)\). We use the ordered basis
\((\ket{0},\ket{1a},\ket{1b},\ket{3})\).
The label \(\ket{3}\) denotes the nondegenerate level of energy \(3\), not the third excited state in an
equally spaced ladder.

The distinct energy values are \(0,1,3\), so the gaps between distinct energy eigenspaces are \(1,2,3\). Thus
the multidimensional modes in this example are caused by the degeneracy of the \(E=1\) eigenspace, not by an
accidental equality of gaps between nondegenerate levels. For example,
\(\ket{0}\!\bra{1a}\) and \(\ket{0}\!\bra{1b}\)
belong to the same Bohr mode because \(\ket{1a}\) and \(\ket{1b}\) have the same energy.

At inverse temperature \(\beta=1\), the Gibbs state is
\begin{equation}
\gamma
=
\frac{1}{Z}
\operatorname{diag}(1,e^{-1},e^{-1},e^{-3}),
\qquad
Z=1+2e^{-1}+e^{-3}.
\label{eq:certificateGibbs}
\end{equation}
Take the input state
\begin{equation}
\rho
=
\begin{pmatrix}
0.35&0.05&0&0.04\\
0.05&0.25&0&0\\
0&0&0.20&0.05\\
0.04&0&0.05&0.20
\end{pmatrix}.
\label{eq:certificateState}
\end{equation}
The matrix is Hermitian and has unit trace. Its diagonal-dominance
margins are \(0.26,0.20,0.15,0.11\). Using
\(2|x_i||x_j|\le |x_i|^2+|x_j|^2\), we obtain
\begin{equation}
x^\dagger\rho x
\ge
\sum_i\left(\rho_{ii}-\sum_{j\ne i}|\rho_{ij}|\right)|x_i|^2
\ge
\frac{11}{100}\|x\|^2,
\end{equation}
and hence \(\rho\ge(11/100)\Id>0\).

We fix the target energy-level populations $p^f=(0.50,0.35,0.15)$, where the middle entry is the total population in the two-dimensional $E=1$ eigenspace, \begin{equation} p^f_1 = \bra{1a}\sigma\ket{1a} + \bra{1b}\sigma\ket{1b}. \end{equation} Thus only the total population of the degenerate eigenspace is
fixed. The final population split between $\ket{1a}$ and $\ket{1b}$
and the full basis-state transition matrix $P_{ri}:=p(r|i)$ remain
optimization variables, subject to the channel constraints. All
individual and joint optimizations use this same feasible set.

\subsection{CGP feasible set}

Let \(\Pi_0:=\ket{0}\!\bra{0}\),
\(\Pi_1:=\ket{1a}\!\bra{1a}+\ket{1b}\!\bra{1b}\), and
\(\Pi_3:=\ket{3}\!\bra{3}\) be the projectors onto the three energy eigenspaces. For the target energy-level population vector
\(p^f=(0.50,0.35,0.15)\), define \(\Gamma_{\CGP}(\rho,p^f)\) as the set of channels \(\Phi\) satisfying
\begin{equation}
\begin{aligned}
\Gamma_{\CGP}(\rho,p^f)
:=
\bigl\{\Phi:\;&
\Phi \ \text{is completely positive and trace preserving},\\
&
\Phi(\gamma)=\gamma,\\
&
\Phi(e^{-itH}Xe^{itH})
=
e^{-itH}\Phi(X)e^{itH}
\quad \forall X,\ \forall t\in\mathbb R,\\
&
\Tr(\Pi_0\Phi(\rho))=0.50,\\
&
\Tr(\Pi_1\Phi(\rho))=0.35,\\
&
\Tr(\Pi_3\Phi(\rho))=0.15
\bigr\}.
\end{aligned}
\label{eq:CGPFeasibleSet}
\end{equation}
Equivalently, the three population constraints in Eq.~\eqref{eq:CGPFeasibleSet} are
\begin{align}
\bra{0}\Phi(\rho)\ket{0}
&=
0.50,
\label{eq:targetPop0}\\
\bra{1a}\Phi(\rho)\ket{1a}
+
\bra{1b}\Phi(\rho)\ket{1b}
&=
0.35,
\label{eq:targetPop1}\\
\bra{3}\Phi(\rho)\ket{3}
&=
0.15.
\label{eq:targetPop3}
\end{align}

\subsection{Choi SDP formulation}

We use the ordered basis \(\mathcal B:=(\ket{0},\ket{1a},\ket{1b},\ket{3})\).
To avoid confusion with the degeneracy labels \(1a,1b\), we denote output basis indices by
\(x,y\in\mathcal B\) and input basis indices by \(u,v\in\mathcal B\). We use the Choi convention
\cite{Choi1975CPMaps}
\begin{equation}
J(\Phi)
=
\sum_{x,y,u,v}
\Phi_{xy,uv}\ket{x}\ket{u}\bra{y}\bra{v},
\qquad
\Phi(X)_{xy}
=
\sum_{u,v}J_{xu,yv}X_{uv}.
\label{eq:certificateChoiConvention}
\end{equation}
Thus the Choi entries and superoperator entries are related by \(J_{xu,yv}=\Phi_{xy,uv}\).

In this convention, the SDP constraints defining the \(\CGP\) feasible set are
\begin{equation}
\begin{aligned}
J&\ge0,
\\
\sum_x J_{xu,xv}&=\delta_{uv}
\qquad
\forall\,u,v\in\mathcal B,
\\
\sum_u J_{xu,yu}\gamma_u&=\gamma_x\delta_{xy}
\qquad
\forall\,x,y\in\mathcal B,
\\
J_{xu,yv}&=0
\qquad
\text{unless }
E_x-E_u=E_y-E_v .
\end{aligned}
\label{eq:certificateSDPConstraints}
\end{equation}
The first line is complete positivity, the second is trace preservation, the third is Gibbs preservation for
diagonal \(\gamma\), and the fourth imposes time-translation covariance.

The target-population constraints from Eq.~\eqref{eq:CGPFeasibleSet} are linear in \(J\). Explicitly, since
\(\Phi(\rho)_{xy}=\sum_{u,v}J_{xu,yv}\rho_{uv}\),
they become
\begin{align}
\sum_{u,v}J_{0u,0v}\rho_{uv}
&=
0.50,
\label{eq:certificateTargetPop0Choi}\\
\sum_{u,v}
\left(
J_{1a\,u,1a\,v}
+
J_{1b\,u,1b\,v}
\right)\rho_{uv}
&=
0.35,
\label{eq:certificateTargetPop1Choi}\\
\sum_{u,v}J_{3u,3v}\rho_{uv}
&=
0.15.
\label{eq:certificateTargetPop3Choi}
\end{align}

\begin{lemma}[Compactness of the CGP feasible set]
\label{lem:certificateCompactness}
The feasible set \(\Gamma_{\CGP}(\rho,p^f)\) is compact, possibly empty. When it is nonempty, the linear objectives below attain their optima.
\end{lemma}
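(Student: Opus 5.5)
The plan is to identify $\Gamma_{\CGP}(\rho,p^f)$ with its image under the Choi isomorphism $\Phi\mapsto J(\Phi)$, a linear bijection between superoperators on $M_4(\mathbb C)$ and $16\times16$ matrices. It then suffices to show that the corresponding set of Choi matrices is closed and bounded in the finite-dimensional space $M_{16}(\mathbb C)$. Heine--Borel then gives compactness, and continuity of linear objectives gives attainment.

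\textbf{Closedness.} By Eqs.~\eqref{eq:certificateSDPConstraints} and~\eqref{eq:certificateTargetPop0Choi}--\eqref{eq:certificateTargetPop3Choi}, the feasible set is the intersection of several sets:
\begin{itemize}[leftmargin=2.2em]
\item the positive semidefinite cone $\{J\ge0\}$, which is closed because $J\ge0$ is equivalent to $\bra{\xi}J\ket{\xi}\ge0$ for all $\xi$, an intersection of closed half-spaces;
\item the solution set of finitely many linear equations, namely trace preservation, Gibbs preservation, the covariance zero pattern and the three population constraints. Each equation defines a closed affine subspace.
\end{itemize}
For covariance, the entrywise zero pattern is equivalent to the integrated condition $\Phi(e^{-itH}Xe^{itH})=e^{-itH}\Phi(X)e^{itH}$ for all $t$. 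This follows because both sides acquire phases $e^{-it(E_u-E_v)}$ and $e^{-it(E_x-E_y)}$ on matrix units, and these agree for all $t$ exactly when $E_x-E_u=E_y-E_v$. Even without invoking that equivalence, each fixed-$t$ condition is a closed linear condition, so an intersection over all $t$ remains closed.

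\textbf{Boundedness.} Positivity together with trace preservation gives $\Tr J=\sum_u\sum_x J_{xu,xu}=\sum_u 1=4$. For $J\ge0$ we have $\|J\|_{\mathrm{op}}\le\Tr J$, and hence $\|J\|_2\le\Tr J=4$. This bounds the set uniformly, so the feasible set is compact. It may be empty, since the population constraints need not be satisfiable.

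\textbf{Attainment.} Any objective that is linear in $\Phi$, or real-linear such as a real part of an entry, is a continuous function of $J$. When the feasible set is nonempty, the extreme value theorem gives attained maxima and minima.

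There is no real obstacle in this argument. The only point requiring a little care is that covariance is stated for all real $t$ rather than as an entrywise condition. I would handle it either by the phase argument above or simply by noting that an arbitrary intersection of closed sets is closed.
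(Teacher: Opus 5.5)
Your proposal is correct and follows essentially the same route as the paper. Both arguments use closedness of the PSD cone and of the affine trace-, Gibbs-, covariance- and population-constraints, boundedness from $\Tr J=4$, and then compactness together with continuity of the linear objectives for attainment. Your additional remarks only make explicit steps the paper leaves implicit: the phase argument identifying the integrated covariance condition with the Choi zero pattern, and the bound $\|J\|_2\le\Tr J$. That bound follows directly from $\sum_k\lambda_k^2\le(\sum_k\lambda_k)^2$ for nonnegative eigenvalues, not from the operator-norm bound alone.
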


\begin{proof}
The Choi constraints are closed and convex. Positivity and trace preservation imply
\(\Tr J=\Tr(\Id_{\mathrm{in}})=d\),
so \(J\) is bounded in trace norm, and hence bounded in every finite-dimensional norm. The
Gibbs-preservation, covariance, and target-population constraints are closed linear constraints. Therefore the
feasible set is closed and bounded in finite dimension, hence compact.
\end{proof}

\begin{remark}[Nonemptiness of the feasible set]
\label{rem:feasibleNonempty}
For the concrete instance, nonemptiness follows from the exactly
feasible Choi witnesses constructed and verified in
Lemma~\ref{lem:certifiedSDP}. Compactness then guarantees
attainment of the individual and summed linear objectives.
\end{remark}

\subsection{Directional coherence-transfer objectives}

Define three directional coherence-transfer functionals:
\(L_{01a}(\Phi):=\operatorname{Re}\bra{0}\Phi(\rho)\ket{1a}\),
\(L_{1b3}(\Phi):=\operatorname{Re}\bra{1b}\Phi(\rho)\ket{3}\), and
\(L_{03}(\Phi):=\operatorname{Re}\bra{0}\Phi(\rho)\ket{3}\).
These are linear functions of the Choi matrix.

\begin{remark} 
The three functionals are directional coherence-transfer witnesses rather than norms of the full coherence blocks. This choice keeps the optimization linear in the Choi matrix and allows the incompatibility problem to be formulated directly as a semidefinite program. 
\end{remark}

\subsection{Numerical SDP separation and certified bounds} \label{subsec:sdp-separation} Solving the three individual SDPs gives the floating-point values \begin{align} \widehat L_{01a}^{\CGP} &= 0.03223447758, \label{eq:L01aValue}\\ \widehat L_{1b3}^{\CGP} &= 0.04260690178, \label{eq:L1b3Value}\\ \widehat L_{03}^{\CGP} &= 0.02816833801. \label{eq:L03Value} \end{align} Their sum is \begin{equation} \widehat S_{\rm ind} := \widehat L_{01a}^{\CGP} + \widehat L_{1b3}^{\CGP} + \widehat L_{03}^{\CGP} = 0.10300971737. \label{eq:sumIndividualValues} \end{equation} Optimizing the sum of the same three objectives over one common channel gives \begin{equation} \widehat S_{\rm sum} := \widehat{\max}_{\Phi\in\Gamma_{\CGP}(\rho,p^f)} \left[ L_{01a}(\Phi)+L_{1b3}(\Phi)+L_{03}(\Phi) \right] = 0.08812637174. \label{eq:sumObjectiveValue} \end{equation} The numerical gap is therefore \begin{equation} \widehat S_{\rm ind}-\widehat S_{\rm sum} = 0.01488334563. \label{eq:certificateGap} \end{equation} These values indicate the size of the separation but are not used as certified bounds. The proof below uses fixed primal and dual witnesses which are verified independently in rigorous ball arithmetic and provided in the Supplemental Material~\cite{SM}. Let \begin{equation} L_k^{\star,\CGP} := \sup_{\Phi\in\Gamma_{\CGP}(\rho,p^f)} L_k(\Phi), \qquad k\in\{01a,1b3,03\}, \end{equation} and define \begin{align} S_{\rm sum}^{\star,\CGP} := \sup_{\Phi\in\Gamma_{\CGP}(\rho,p^f)} \left[ L_{01a}(\Phi)+L_{1b3}(\Phi)+L_{03}(\Phi) \right]. \end{align} The verified witnesses establish \begin{equation} L_{01a}^{\star,\CGP} + L_{1b3}^{\star,\CGP} + L_{03}^{\star,\CGP} > 0.1030090, \label{eq:roundedIndividualLower} \end{equation} and \begin{equation} S_{\rm sum}^{\star,\CGP} < 0.0881270. \label{eq:roundedSumUpper} \end{equation} Thus the certified separation is larger than $0.0148820$.

\begin{lemma}[Certified SDP bounds]
\label{lem:certifiedSDP}
The conservative inequalities \eqref{eq:roundedIndividualLower} and \eqref{eq:roundedSumUpper} hold.
Explicitly,
\begin{equation}
L_{01a}^{\star,\CGP}+L_{1b3}^{\star,\CGP}+L_{03}^{\star,\CGP}
\ \ge\
0.1030097
\ >\
0.1030090,
\label{eq:certifiedLower}
\end{equation}
and
\begin{equation}
S_{\rm sum}^{\star,\CGP}
\ \le\
0.0881264
\ <\
0.0881270 .
\label{eq:certifiedUpper}
\end{equation}
\end{lemma}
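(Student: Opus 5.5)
The two inequalities rest on different certificates. The lower bound~\eqref{eq:certifiedLower} comes from exact \emph{primal} witnesses, and the upper bound~\eqref{eq:certifiedUpper} from an exact \emph{dual} witness. Both are checked in rigorous (ball or rational) arithmetic rather than read off from solver output.

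\textbf{Lower bound.} For each $k\in\{01a,1b3,03\}$ I would take the floating-point optimal Choi matrix from the individual SDP and round it to a rational Hermitian matrix $\tilde J_k$ supported on the covariance pattern of Eq.~\eqref{eq:certificateSDPConstraints}. Rounding destroys the affine constraints, namely trace preservation, Gibbs preservation and the three target-population constraints~\eqref{eq:certificateTargetPop0Choi}--\eqref{eq:certificateTargetPop3Choi}. I would restore them exactly by orthogonal projection onto the affine subspace they define, computed in exact rational arithmetic. Projecting onto the covariant subspace first keeps the result covariant. Call the projected matrix $J_k$. It remains to show $J_k\ge0$. Since $J_k$ is covariant, it is block diagonal in the Bohr-frequency sectors of $E_x-E_u$, so positivity can be checked blockwise. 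For each block I would do one of two things. One option is to exhibit an exact $LDL^\dagger$ factorization with nonnegative pivots. The other is to shift the solver solution toward the strictly feasible point $\gamma\otimes\Id$ (the replacement channel $X\mapsto\Tr(X)\gamma$, which is CGP but must be mixed in carefully because of the population constraints) by a tiny convex weight, so that the smallest eigenvalue is certified positive by interval Gershgorin bounds. Once each $J_k$ is certified feasible, $L_k(J_k)$ is evaluated exactly. The sum $\sum_k L_k(J_k)\ge 0.1030097$ then bounds $\sum_k L_k^{\star,\CGP}$ from below. Feasibility of these witnesses also settles Remark~\ref{rem:feasibleNonempty}.

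\textbf{Upper bound.} I would write the Lagrangian dual of the summed SDP. There are multipliers $Y$ for trace preservation (a $4\times4$ Hermitian matrix), $G$ for Gibbs preservation (Hermitian), real $c_0,c_1,c_3$ for the population constraints, and multipliers for the covariance zeros. The last can be absorbed by restricting $J$ to the covariant subspace, so the dual slack only needs positivity on that subspace, i.e.\ blockwise. Write $W$ for the Hermitian Choi-space operator representing $L_{01a}+L_{1b3}+L_{03}$, that is, the symmetrized $\tfrac12(\ket{x}\bra{y}\otimes\rho^T+\text{h.c.})$ terms. Weak duality gives
\[
S_{\rm sum}^{\star,\CGP}\le \Tr Y+\sum_x\gamma_x G_{xx}+0.50c_0+0.35c_1+0.15c_3
\]
whenever the slack
\[
Z:=\Id\otimes Y^T+\textstyle\sum(\text{Gibbs term})+\sum_\alpha c_\alpha(\Pi_\alpha\otimes\rho^T)-W
\]
is positive semidefinite on each covariant block. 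I would take the solver's dual variables, round them to rationals, and add a small multiple $\varepsilon$ of the identity to $Y$, which raises the bound by $4\varepsilon$. Then I would certify $Z\ge0$ blockwise via an exact Cholesky factorization or interval eigenvalue enclosures, and evaluate the dual objective in ball arithmetic to obtain $\le0.0881264$.

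The main obstacle is bookkeeping in the dual. The index convention $J_{xu,yv}=\Phi_{xy,uv}$ and the transpose of $\rho$ must be handled correctly, and the dual slack has to be built on the covariant subspace rather than the full space, or the bound is not valid. The positivity certification itself is routine once the blocks are small: the largest zero-frequency block has dimension $6$. The margin of about $6\times10^{-7}$ between certified and rounded values leaves ample room for the rounding and shift perturbations, provided the solver tolerances are around $10^{-9}$.
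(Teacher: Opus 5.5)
\textbf{The dual side is fine; the gap is in the positivity step of the lower bound.} Your overall architecture matches the paper's. For the lower bound you use rational primal witnesses restored to exact feasibility by an exact affine projection. For the upper bound you use a rounded dual vector with blockwise certification of the slack. Your $\varepsilon\Id$ shift of the trace-preservation multiplier, costing $4\varepsilon$, is exactly the paper's $\delta$-slack combined with $\Tr J=4$. The problem is certifying positivity of the primal witnesses.

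\textbf{Why direct factorization is unreliable.} A nonconstant linear objective is maximized on the relative boundary of the feasible spectrahedron, so each individually optimal Choi matrix is singular on some charge block. Here all three objectives are real parts of off-diagonal entries of the charge-$0$ block, namely $0.05\,J_{00,\,1a1a}$, $0.05\,J_{1b1b,\,33}$ and $0.04\,J_{00,\,33}$, and the optimizer drives them toward Cauchy--Schwarz saturation. After rounding and exact projection, the small eigenvalues of $J_k$ are perturbed by an amount comparable to the correction, with either sign. An exact $LDL^{\mathsf T}$ with nonnegative pivots on $J_k$ itself is therefore not something you can count on.

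\textbf{Why your fallback fails.} Mixing toward $\gamma\otimes\Id$ does not work as stated. The replacement channel maps $\rho$ to $\gamma$, whose energy populations are $(1,2e^{-1},e^{-3})/Z\approx(0.56,0.41,0.03)$, not the target $(0.50,0.35,0.15)$. Any positive weight therefore destroys exact feasibility. Re-projecting afterwards adds an $O(s)$ correction that need not be dominated by the $O(s\,e^{-3}/Z)$ gain in the smallest eigenvalue. ``Mixed in carefully'' is precisely the unresolved point.

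Interval Gershgorin bounds cannot rescue this either. The room between the numerical optimum $0.10300971737$ and the claimed $0.1030097$ is below $2\times10^{-8}$, not $6\times10^{-7}$. That forces a mixing weight of order $10^{-7}$. A singular matrix is never strictly diagonally dominant, and with coherences of the size the optimum needs, the discs reach far below zero, more than such a shift can repair.

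\textbf{What is missing.} You need a Slater point for the \emph{full} constraint set: a covariant Choi matrix that is positive definite on every charge block and satisfies trace preservation, Gibbs preservation and the three target populations exactly. Its existence is not automatic, since the population constraints could in principle confine the feasible set to a proper face of the cone, so it has to be exhibited.

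The paper supplies exactly this. It takes a separately computed strict interior witness $J_{\rm int}'$ and applies the same exact Gram-matrix correction, so that it satisfies all eleven affine rows. It then mixes $J_{\rm cert}=(1-s)J_1+sJ_{\rm int}'$ with the dyadic weight $s=2^{-24}$. Both terms are exactly feasible, hence so is $J_{\rm cert}$, and interval $LDL^{\mathsf T}$ factorizations of the charge blocks prove $J_{\rm cert}\succ0$. You can obtain such a point, for example, from an auxiliary SDP maximizing the smallest eigenvalue subject to all constraints. With it, the rest of your lower-bound argument goes through, at a cost of only $O(s)$ in the objective.
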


\begin{proof}
The witnesses were generated numerically with CVXPY and Clarabel \cite{DiamondBoyd2016CVXPY,GoulartChen2026Clarabel} and subsequently verified in rigorous Arb ball arithmetic~\cite{Johansson2017Arb}. The verification script, witness data, and complete log are provided in the Supplemental Material~\cite{SM}. 
Since all problem data are real, the real part of any feasible Choi matrix is feasible and has the same values of the three directional objectives. It therefore suffices to work with real symmetric Choi matrices supported on the covariance pattern. Indeed, write $J=A+iB$, where $A^T=A$ and $B^T=-B$. Positivity of $J$ on real vectors gives $x^TAx\ge0$. Hence, for $z=x+iy$, \begin{equation} z^\dagger Az = x^TAx+y^TAy \ge0, \end{equation} so $A=\operatorname{Re}J\ge0$. All real linear constraints and objectives are unchanged.

Write the independent equality constraints (trace
preservation, Gibbs preservation, and two population constraints; the remaining rows are linearly
dependent) as \(\langle A_i,J\rangle=b_i\), \(i=1,\dots,11\), with the trace inner product, and let
\(C_k\) denote the symmetric matrices representing the directional objectives on the Choi variable. The
Gram matrix \(G_{ij}=\langle A_i,A_j\rangle\) is proved nonsingular by a verified linear solve.

For the lower bound~\eqref{eq:certifiedLower}, each rational witness matrix \(\hat J\) is corrected to
\(J_1:=\hat J+\sum_i z_iA_i\), where \(z:=G^{-1}r\) and \(r_i:=b_i-\langle A_i,\hat J\rangle\); the exact
matrix \(J_1\) satisfies all equality constraints exactly by construction. Mixing with an identically
corrected, strictly positive interior witness \(J_{\rm int}'\) through the exact dyadic weight
\(s=2^{-24}\) gives \(J_{\rm cert}:=(1-s)J_1+sJ_{\rm int}'\), which is exactly feasible. A verified interval
\(LDL^{\mathsf T}\) factorization of every covariance-charge block proves \(J_{\rm cert}\succ0\). Hence
\(J_{\rm cert}\in\Gamma_{\CGP}(\rho,p^f)\) and
\(L_k^{\star,\CGP}\ge\langle C_k,J_{\rm cert}\rangle\); the certified lower endpoints of the three inner
products sum to more than \(0.1030097\).

For the upper bound~\eqref{eq:certifiedUpper}, a rational dual vector \(y\) and the exact dyadic slack
\(\delta=2^{-30}\) satisfy \(S:=\sum_i y_iA_i-C_{\rm sum}\succeq-\delta\Id\), proved by verified interval
\(LDL^{\mathsf T}\) factorizations of the charge blocks of \(S+\delta\Id\). Every feasible Choi matrix obeys
\(J\succeq0\) and \(\Tr J=4\) exactly, by trace preservation. Therefore
\begin{align}
\langle C_{\rm sum},J\rangle
=
y\cdot b-\langle S,J\rangle
\le
y\cdot b+\delta\,\Tr J
=
y\cdot b+4\delta
\le
0.0881264,
\end{align}
where the final number is a certified upper endpoint. All inequality checks are performed as ball
comparisons, which succeed only if the inequality holds for every member of the enclosures.
\end{proof}

\begin{theorem}[Certified incompatibility of three \(\CGP\)-optimal directional transfers]
\label{thm:concreteCGP}
Recall the certified SDP bounds of Lemma~\ref{lem:certifiedSDP}. For the Hamiltonian
\(H=\operatorname{diag}(0,1,1,3)\), input state \(\rho\), and target energy-level populations
\(p^f=(0.50,0.35,0.15)\) defined above, no single covariant Gibbs-preserving channel simultaneously attains
the three individual directional optima \(L_{01a}^{\star,\CGP}\), \(L_{1b3}^{\star,\CGP}\), and
\(L_{03}^{\star,\CGP}\).
Consequently, no thermal operation simultaneously attains these three \(\CGP\)-optimal directional values.
\end{theorem}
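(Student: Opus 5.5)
The plan is a proof by contradiction that uses only the certified numbers of Lemma~\ref{lem:certifiedSDP} together with the inclusion~\eqref{eq:TOsubsetCGP}.

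\emph{Step 1: well-posedness.} By Remark~\ref{rem:feasibleNonempty}, the set \(\Gamma_{\CGP}(\rho,p^f)\) is nonempty. By Lemma~\ref{lem:certificateCompactness}, each individual optimum \(L_k^{\star,\CGP}\) is a finite, attained maximum. So the three target values are well-defined real numbers.

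\emph{Step 2: contradiction.} Suppose a single \(\Phi\in\Gamma_{\CGP}(\rho,p^f)\) satisfies \(L_k(\Phi)=L_k^{\star,\CGP}\) for every \(k\in\{01a,1b3,03\}\). Summing these equalities and using that \(\Phi\) is feasible for the summed problem gives
\begin{equation}
L_{01a}^{\star,\CGP}+L_{1b3}^{\star,\CGP}+L_{03}^{\star,\CGP}
=
\sum_k L_k(\Phi)
\le
S_{\rm sum}^{\star,\CGP}.
\end{equation}
The left-hand side is at least \(0.1030097\) by Eq.~\eqref{eq:certifiedLower}. The right-hand side is at most \(0.0881264\) by Eq.~\eqref{eq:certifiedUpper}. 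This is a contradiction. The same argument shows that the joint maximum of the sum falls short of the sum of the individual maxima by more than \(0.01488\).

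\emph{Step 3: transfer to thermal operations.} Every thermal operation is CPTP, time-translation covariant and Gibbs preserving, so \(\TO\subseteq\CGP\) by Eq.~\eqref{eq:TOsubsetCGP}. A thermal operation reaching the target populations \(p^f\) therefore lies in \(\Gamma_{\CGP}(\rho,p^f)\). If it attained all three \(\CGP\)-optimal values, it would be a \(\CGP\) channel doing so, which Step~2 excludes. The same holds for \(\overline{\TOfin}\), since that set is also contained in \(\CGP\).

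\emph{Main obstacle.} All of the difficulty sits in Lemma~\ref{lem:certifiedSDP}, which we take as given. Its lower bound requires an exactly feasible, strictly positive Choi witness. Its upper bound requires a dual certificate with a rigorously controlled slack; this uses \(\Tr J=4\) to turn the approximate dual inequality \(S\succeq-\delta\Id\) into the bound \(4\delta\). Beyond that lemma, the only point needing care is the rigor of the comparison itself: the separation must be a strict inequality between certified enclosures, and it is, since \(0.1030097>0.0881264\).
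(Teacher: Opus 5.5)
Your proposal is correct and follows the paper's proof: assume one \(\Phi\in\Gamma_{\CGP}(\rho,p^f)\) attains all three individual optima, note that its summed objective would then exceed the certified lower bound of Lemma~\ref{lem:certifiedSDP}, contradict the certified dual upper bound on \(S_{\rm sum}^{\star,\CGP}\), and pass to thermal operations via \(\TO\subseteq\CGP\). Quoting the sharper endpoints \(0.1030097\) and \(0.0881264\) instead of the rounded \(0.1030090\) and \(0.0881270\), and adding the well-posedness and closure remarks, are harmless refinements that do not change the argument.
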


\begin{proof}
Assume, for contradiction, that a single channel \(\Phi\in\Gamma_{\CGP}(\rho,p^f)\)
attains all three individual optima. Then
\begin{align}
L_{01a}(\Phi)+L_{1b3}(\Phi)+L_{03}(\Phi)
=
L_{01a}^{\star,\CGP}
+
L_{1b3}^{\star,\CGP}
+
L_{03}^{\star,\CGP}
>
0.1030090
\end{align}
by Eq.~\eqref{eq:roundedIndividualLower}, which holds by Lemma~\ref{lem:certifiedSDP}. But Eq.~\eqref{eq:roundedSumUpper} says that every feasible
\(\Phi\in\Gamma_{\CGP}(\rho,p^f)\) satisfies
\begin{align}
L_{01a}(\Phi)+L_{1b3}(\Phi)+L_{03}(\Phi)
<
0.0881270.
\end{align}
This is a contradiction. Therefore no \(\CGP\) channel jointly attains the three individual optima.

Since every thermal operation belongs to \(\CGP\), the same joint attainment is impossible for thermal
operations.
\end{proof}

\begin{remark}[Relation to thermal-operation optima] \label{rem:CGPvsTOoptima} The theorem excludes, for thermal operations, the triple of values obtained by optimizing the three objectives separately over $\CGP$. It does not imply that any of these individual $\CGP$ optima is attainable by a thermal operation, and therefore does not yet establish incompatibility of the individual thermal-operation optima. The latter requires the stronger comparison formulated below. \end{remark}

\subsection{Towards a \texorpdfstring{$\TO$}{TO}-sharp certificate} Define \begin{equation} L_k^{\star,\TO} := \sup_{\Phi\in\Gamma(\rho,p^f;\TO)} L_k(\Phi), \qquad k\in\{01a,1b3,03\}. \end{equation} A $\TO$-sharp incompatibility result would require \begin{equation} \sup_{\Phi\in\Gamma(\rho,p^f;\TO)} \left( L_{01a}+L_{1b3}+L_{03} \right) < L_{01a}^{\star,\TO} + L_{1b3}^{\star,\TO} + L_{03}^{\star,\TO}. \label{eq:toshsharp} \end{equation} Here $\TO=\TOfin$ under the convention of Sec.~\ref{sec:framework}. Replacing both feasible sets by $\overline{\TOfin}$ gives the corresponding statement for the closure. A strict gap in that setting would also exclude asymptotic joint attainment of the three individual suprema.

\begin{proposition}[Sufficient condition for upgrading the certificate]
\label{prop:TOsharpUpgrade}
Suppose that the following, presently unproved bridge holds for the concrete instance above:
\begin{align}
L_k^{\star,\TO}=L_k^{\star,\CGP}
\qquad
k\in\{01a,1b3,03\}.
\end{align}
Then Eq.~\eqref{eq:toshsharp} holds. Hence no thermal operation jointly attains the three individual
\(\TO\) optima.
\end{proposition}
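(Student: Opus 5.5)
The plan is to chain the inclusion $\TO\subseteq\CGP$ of Eq.~\eqref{eq:TOsubsetCGP}, which restricts to the feasible sets as $\Gamma(\rho,p^f;\TO)\subseteq\Gamma_{\CGP}(\rho,p^f)$, with the certified bounds of Lemma~\ref{lem:certifiedSDP}. First we note the direction of the inequality coming from the inclusion: every thermal operation satisfying the target-population constraints is a covariant Gibbs-preserving CPTP map satisfying the same constraints. Hence the supremum of the summed objective over $\Gamma(\rho,p^f;\TO)$ is at most $S_{\rm sum}^{\star,\CGP}$, which is strictly below $0.0881270$ by Eq.~\eqref{eq:roundedSumUpper}. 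If $\Gamma(\rho,p^f;\TO)$ is empty, the left-hand side of Eq.~\eqref{eq:toshsharp} is $-\infty$ by convention. We would then treat this case separately, or simply observe that the bridge hypothesis presupposes finite suprema and therefore nonemptiness.

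Second, we use the bridge hypothesis to rewrite the right-hand side of Eq.~\eqref{eq:toshsharp}. Under the assumption $L_k^{\star,\TO}=L_k^{\star,\CGP}$, the sum of the individual thermal-operation suprema equals the sum of the individual $\CGP$ optima. By Eq.~\eqref{eq:roundedIndividualLower}, that sum exceeds $0.1030090$. Combining the two steps gives
\begin{equation}
\sup_{\Phi\in\Gamma(\rho,p^f;\TO)}\sum_k L_k(\Phi)\le S_{\rm sum}^{\star,\CGP}<0.0881270<0.1030090<\sum_k L_k^{\star,\TO},
\end{equation}
which is Eq.~\eqref{eq:toshsharp}.

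Finally, the non-attainment statement follows exactly as in the proof of Theorem~\ref{thm:concreteCGP}. Suppose some thermal operation $\Phi$ attained all three suprema $L_k^{\star,\TO}$. Then $\sum_k L_k(\Phi)=\sum_k L_k^{\star,\TO}$, and this value would exceed the supremum of the summed objective over $\Gamma(\rho,p^f;\TO)$, which is a contradiction. The same argument also rules out asymptotic joint attainment: approximate attainment within $\epsilon$ of each supremum is impossible once $3\epsilon$ is smaller than the certified gap.

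There is no genuine obstacle, since the argument is a short chain of inequalities. The only point needing care is bookkeeping. One must keep the direction of the inclusion straight: it bounds the joint $\TO$ supremum from above, and it is the bridge hypothesis, not the inclusion, that raises the individual $\TO$ suprema to their $\CGP$ values. One must also handle the conventions for suprema over a possibly empty or nonattained feasible set.
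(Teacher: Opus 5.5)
Your proposal is correct and follows the paper's own argument. The inclusion $\TO\subseteq\CGP$ bounds the joint $\TO$ supremum by $S_{\rm sum}^{\star,\CGP}$, Lemma~\ref{lem:certifiedSDP} places this strictly below $\sum_k L_k^{\star,\CGP}$, and the bridge hypothesis identifies the latter with $\sum_k L_k^{\star,\TO}$. Your extra remarks are valid refinements that the paper leaves implicit: that the bridge forces nonemptiness, the explicit non-attainment step, and the $3\epsilon$ argument against asymptotic joint attainment.
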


\begin{proof}
Since \(\TO\subseteq\CGP\),
\begin{align}
\sup_{\Phi\in\Gamma(\rho,p^f;\TO)}
(L_{01a}+L_{1b3}+L_{03})
\le
\sup_{\Phi\in\Gamma(\rho,p^f;\CGP)}
(L_{01a}+L_{1b3}+L_{03}).
\end{align}
By Lemma~\ref{lem:certifiedSDP}, the certified \(\CGP\) bounds give
\begin{align}
\sup_{\Phi\in\Gamma(\rho,p^f;\CGP)}
(L_{01a}+L_{1b3}+L_{03})
<
\sum_k L_k^{\star,\CGP}.
\end{align}
By the assumed equality of individual optima,
\begin{align}
\sum_k L_k^{\star,\CGP}
=
\sum_k L_k^{\star,\TO}.
\end{align}
Combining the inequalities gives Eq.~\eqref{eq:toshsharp}.
\end{proof}

The shell-amplitude representation of Sec.~\ref{sec:TO-shell}
suggests an analytic route to this comparison: characterize the
amplitude constraints for channels attaining, or approaching, the
individual thermal-operation suprema and test their joint
compatibility with shell unitarity.

\section{Abstract covariance-sector Gram compression}
\label{sec:gram-mechanism}

The shell-amplitude representation describes compatibility at the
level of an energy-preserving dilation. We now consider auxiliary
finite-dimensional compressions of a positive Choi/DMP Gram object.
These compressions select directional data within the relevant
covariance sectors. Under the rank assumptions specified below,
positivity imposes range-inclusion constraints on the coupled data.

\subsection{Gram Cauchy--Schwarz equality}

\begin{lemma}[Cauchy--Schwarz from positivity]
\label{lem:CS}
Let \(G\ge0\) be a positive semidefinite matrix. Then there exist vectors \(v_\alpha\) such that
\begin{align}
G_{\alpha\beta}=\langle v_\alpha,v_\beta\rangle .
\end{align}
Consequently,
\begin{align}
|G_{\alpha\beta}|^2
\le
G_{\alpha\alpha}G_{\beta\beta}.
\end{align}
If \(G_{\alpha\alpha}G_{\beta\beta}>0\), equality holds if and only if \(v_\alpha\) and \(v_\beta\) are
linearly dependent.
\end{lemma}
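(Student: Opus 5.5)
The plan is to prove Lemma~\ref{lem:CS} in three steps: build a Gram factorization of \(G\), apply Cauchy--Schwarz to the resulting vectors, and then handle the equality case. Let \(G\) be an \(n\times n\) positive semidefinite matrix.

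\textbf{Gram factorization.} By the spectral theorem, \(G=\sum_k\lambda_k\ket{e_k}\!\bra{e_k}\) with \(\lambda_k\ge0\). Hence the square root \(G^{1/2}\) exists, is Hermitian, and is itself positive semidefinite. For each index \(\alpha\) define the column vector
\[
v_\alpha:=G^{1/2}\ket{\alpha}.
\]
Using the paper's convention, in which the inner product is conjugate-linear in the first argument, we get
\[
\langle v_\alpha,v_\beta\rangle=\bra{\alpha}G^{1/2}G^{1/2}\ket{\beta}=G_{\alpha\beta}.
\]
A Cholesky factorization \(G=R^\dagger R\) with \(v_\alpha\) the \(\alpha\)-th column of \(R\) would serve equally well. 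The only point to check is that the index order matches the convention \(G_{\alpha\beta}=\langle v_\alpha,v_\beta\rangle\) as written in the statement.

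\textbf{Cauchy--Schwarz inequality.} Apply the Cauchy--Schwarz inequality in \(\mathbb C^n\) to the two vectors:
\[
|G_{\alpha\beta}|^2=|\langle v_\alpha,v_\beta\rangle|^2\le\|v_\alpha\|^2\|v_\beta\|^2=G_{\alpha\alpha}G_{\beta\beta}.
\]

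\textbf{Equality case.} This is the only step with any content. Suppose \(G_{\alpha\alpha}G_{\beta\beta}>0\). Then both \(v_\alpha\) and \(v_\beta\) are nonzero, and we invoke the standard characterization of equality in Cauchy--Schwarz, exactly as in the proof of Theorem~\ref{thm:krausGram}.

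For the direction "equality implies dependence", set \(c:=\langle v_\beta,v_\alpha\rangle/\|v_\beta\|^2\) and expand
\[
\|v_\alpha-c\,v_\beta\|^2=\|v_\alpha\|^2-\frac{|\langle v_\beta,v_\alpha\rangle|^2}{\|v_\beta\|^2}.
\]
Under equality the right-hand side vanishes, so \(v_\alpha=c\,v_\beta\). For the converse, if \(v_\alpha=c\,v_\beta\), direct substitution gives equality.

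Finally, we should remark that linear dependence of the chosen vectors is intrinsic to \(G\) and not an artifact of the factorization. Any two Gram realizations of the same matrix are related by a partial isometry, which preserves linear relations among the vectors, so the same dependence holds for every Gram realization, not just the one constructed above. The same point is made for dilations in Remark~\ref{rem:dilationIndependence}.
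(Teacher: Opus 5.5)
Your proof is correct and follows the paper's argument. The paper factors $G=V^\dagger V$ and takes the columns of $V$ as the $v_\alpha$; your choice $V=G^{1/2}$ is a special case of this, and the paper then applies Cauchy--Schwarz and its equality case just as you do, while your explicit projection computation and your remark that dependence is intrinsic to $G$ (equivalently $|G_{\alpha\beta}|^2=G_{\alpha\alpha}G_{\beta\beta}$) are valid additions the paper leaves implicit.
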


\begin{proof}
Since \(G\ge0\), there exists \(V\) such that \(G=V^\dagger V\). Let \(v_\alpha\) be the \(\alpha\)-th column
of \(V\). Then \(G_{\alpha\beta}=\langle v_\alpha,v_\beta\rangle\).
The inequality is the ordinary Cauchy--Schwarz inequality. If both vectors are nonzero, equality in
Cauchy--Schwarz holds if and only if they are linearly dependent.
\end{proof}

\begin{remark}[Directional nature of saturation]
In a one-dimensional mode, equality in the relevant DMP Cauchy--Schwarz bound can be viewed as equality in a
scalar \(2\times2\) principal minor. In a degenerate mode, the equality condition is generally directional:
it concerns the particular Gram vectors selected by the input and output coherence directions. It does not
imply that the whole Choi block or the whole mode superoperator has rank one.
\end{remark}

The following example makes this point explicit.

\begin{proposition}[Directional saturation need not force rank-one mode action]
\label{prop:directional-not-rankone}
Let \(H=E_0\ket{0}\!\bra{0}+E_1(\ket{1}\!\bra{1}+\ket{2}\!\bra{2})\).
For \(q\in(0,1)\), define \(U_+:=\Id\),
\(U_-:=\ket{0}\!\bra{0}+\ket{1}\!\bra{1}-\ket{2}\!\bra{2}\), and
\(\Phi_q(\rho):=qU_+\rho U_+^\dagger+(1-q)U_-\rho U_-^\dagger\).
Then \(\Phi_q\) is covariant and Gibbs-preserving. It preserves populations. On the mode connecting
\(E_0\) and \(E_1\),
\begin{align}
\Phi_q(\ket{0}\!\bra{1})=\ket{0}\!\bra{1},
\qquad
\Phi_q(\ket{0}\!\bra{2})=(2q-1)\ket{0}\!\bra{2}.
\end{align}
Thus the unchanged-population bound is saturated for the direction \(\ket{0}\!\bra{1}\), but for generic
\(q\) the full two-dimensional mode action is not rank one.
\end{proposition}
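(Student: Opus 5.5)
The plan is to verify each claim of Proposition~\ref{prop:directional-not-rankone} by direct computation, since $\Phi_q$ is an explicit mixture of two unitary channels.

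\emph{Covariance and Gibbs preservation.} Both $U_+=\Id$ and $U_-$ are diagonal in the energy eigenbasis, so they commute with $H$, and hence with $e^{-itH}$ for every $t$. Each conjugation $X\mapsto U_\pm XU_\pm^\dagger$ is therefore time-translation covariant, and so is the convex combination $\Phi_q$. The Gibbs state $\gamma$ is a function of $H$, so it commutes with $U_\pm$; thus $U_\pm\gamma U_\pm^\dagger=\gamma$ and $\Phi_q(\gamma)=\gamma$. Complete positivity and trace preservation hold because $\Phi_q$ has Kraus operators $\sqrt q\,U_+$ and $\sqrt{1-q}\,U_-$. (Indeed, each $U_\pm$ is a bath-free thermal operation of the kind discussed after Proposition~\ref{prop:TOorthogonalityObstruction}.)

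\emph{Population preservation and mode action.} Because $U_\pm$ are diagonal unitaries with entries $\pm1$, they fix every $\ket{k}\!\bra{k}$, so all populations are preserved and $p(r|i)=\delta_{ri}$. For coherences, $U_-\ket{0}\!\bra{1}U_-^\dagger=\ket{0}\!\bra{1}$ and $U_-\ket{0}\!\bra{2}U_-^\dagger=-\ket{0}\!\bra{2}$. Averaging gives $\Phi_q(\ket0\!\bra1)=\ket0\!\bra1$ and $\Phi_q(\ket0\!\bra2)=(q-(1-q))\ket0\!\bra2=(2q-1)\ket0\!\bra2$.

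\emph{Saturation and rank.} By Theorem~\ref{thm:krausGram}, the relevant bound is $|\Phi(\ket0\!\bra1)_{01}|\le\sqrt{p(0|0)p(1|1)}=1$. Here the left side equals $1$, so the bound is saturated in the direction $\ket0\!\bra1$. On the two-dimensional mode $\Span\{\ket0\!\bra1,\ket0\!\bra2\}$, the superoperator acts as $\diag(1,2q-1)$. This matrix has rank two whenever $q\neq1/2$, so the mode action is not rank one for generic $q$.

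There is no real obstacle. The only care needed is to fix the conjugation convention, so that the sign from $U_-$ on $\ket{0}\!\bra{2}$ enters exactly once, through the ket $\ket 2$ or the bra $\bra 2$ but not both.
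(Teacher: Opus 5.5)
Your proposal is correct and follows essentially the same route as the paper. Both arguments use $[U_\pm,H]=0$ to obtain covariance and Gibbs preservation, compute $U_-\ket{0}\!\bra{2}U_-^\dagger=-\ket{0}\!\bra{2}$ directly, and read off the diagonal mode action $\diag(1,2q-1)$, which has rank two for $q\neq 1/2$. Your explicit checks of complete positivity via Kraus operators, of population preservation, and of saturation via Theorem~\ref{thm:krausGram} fill in details the paper leaves implicit.
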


\begin{proof}
Both \(U_+\) and \(U_-\) commute with \(H\), so the corresponding unitary channels are covariant and
Gibbs-preserving, and so is their convex combination. Direct multiplication gives
\(U_-\ket{0}\!\bra{1}U_-^\dagger=\ket{0}\!\bra{1}\) and
\(U_-\ket{0}\!\bra{2}U_-^\dagger=-\ket{0}\!\bra{2}\). Therefore
\(\Phi_q(\ket{0}\!\bra{1})=\ket{0}\!\bra{1}\), whereas
\(\Phi_q(\ket{0}\!\bra{2})=(2q-1)\ket{0}\!\bra{2}\).
For \(q\neq1/2\), the induced map on
\(\Span\{\ket{0}\!\bra{1},\ket{0}\!\bra{2}\}\)
has two nonzero eigenvalues, \(1\) and \(2q-1\), and therefore is not a rank-one collapse.
\end{proof}

\begin{remark}[Thermal-operation realization of \(\Phi_q\)]
\label{rem:PhiqIsTO}
At finite positive inverse temperature \(0<\beta<\infty\), the channel \(\Phi_q\) is in fact a thermal
operation for every \(q\in(0,1)\), not merely covariant and Gibbs preserving. For \(q\ge1/2\), take a bath qubit with
\(H_B=\operatorname{diag}(0,\epsilon_q)\), \(\epsilon_q:=\beta^{-1}\ln[q/(1-q)]\ge0\), so that
\(\tau_B=\operatorname{diag}(q,1-q)\), together with the controlled unitary
\(U=U_+\otimes\ket{0}\!\bra{0}_B+U_-\otimes\ket{1}\!\bra{1}_B\).
Since each \(U_\pm\) commutes with \(H\) and \(U\) is block diagonal in the bath energy eigenbasis, one has
\([U,H+H_B]=0\) and \(\Tr_B[U(\rho\otimes\tau_B)U^\dagger]=\Phi_q(\rho)\). For \(q<1/2\), take instead
\(\epsilon_q:=\beta^{-1}\ln[(1-q)/q]>0\), so that \(\tau_B=\diag(1-q,q)\), and apply \(U_-\) on the ground
state and \(U_+\) on the excited state. At $q=1/2$, a degenerate bath qubit suffices. Hence, for every finite positive inverse temperature, Proposition~\ref{prop:directional-not-rankone} exhibits the directional-saturation phenomenon already within $\TO$, and not only within $\CGP$. 
\end{remark}

\subsection{Abstract covariance-sector Gram blocks}
\label{subsec:compressedGramBlock}

Let \(G_{\rm glob}\ge0\) denote the positive Choi/DMP Gram object associated with one fixed covariant
channel. The precise indexing of \(G_{\rm glob}\) depends on whether one uses the Choi matrix, a damping
matrix, or an equivalent Stinespring Gram representation. The only structure used here is positivity and the
fact that all mode-wise equality conditions are represented inside the same \(G_{\rm glob}\).

For definiteness, one may take \(G_{\rm glob}\) to be the Choi matrix \(J(\Phi)\)
\cite{Choi1975CPMaps}, restricted to selected covariance sectors. With the convention
\(J_{xu,yv}=\Phi_{xy,uv}\), covariance is equivalently expressed as
\begin{equation}
E_x-E_y=E_u-E_v
\quad\Longleftrightarrow\quad
E_x-E_u=E_y-E_v.
\label{eq:choiCovarianceEquivalent}
\end{equation}
The first form matches output and input operator gaps; the second says that Choi basis states
\(\ket{x,u}\) and \(\ket{y,v}\) carry the same Choi charge. After factorizing
\(J(\Phi)=V^\dagger V\), the same construction may be written as a Stinespring Gram compression. The
argument below uses only positivity and does not assign physical coordinates to the abstract selectors.

Let \(R_{01},R_{12}:\mathbb C^g\to\mathcal K_{\rm glob}\) be abstract linear embeddings selecting two
\(g\)-dimensional directional subspaces, and let \(r_{02}\in\mathcal K_{\rm glob}\) select a scalar
direction. Define
\(A_{01}:=R_{01}^\dagger G_{\rm glob}R_{01}\),
\(A_{12}:=R_{12}^\dagger G_{\rm glob}R_{12}\),
\(F:=R_{01}^\dagger G_{\rm glob}R_{12}\),
\(D:=R_{01}^\dagger G_{\rm glob}r_{02}\),
\(E:=R_{12}^\dagger G_{\rm glob}r_{02}\), and
\(c:=r_{02}^\dagger G_{\rm glob}r_{02}\).
Then the compressed block
\begin{align}
\mathcal G
=
\begin{pmatrix}
A_{01} & F & D\\
F^\dagger & A_{12} & E\\
D^\dagger & E^\dagger & c
\end{pmatrix}
\end{align}
is positive semidefinite, because it is a compression of \(G_{\rm glob}\).

Thus $\mathcal G$ is a finite-dimensional compression of one common positive Gram object. Once a concrete coherence-transfer problem fixes the selectors $R_{01}$, $R_{12}$, and $r_{02}$, positivity of $\mathcal G$ gives compatibility conditions between the corresponding directional data. The connection between these rank conditions and physical saturation will be specified separately for the mixed-degenerate setting considered below.

\subsection{Positive block matrices and range inclusion}

The basic linear-algebra mechanism behind the obstruction is the following standard range-inclusion property
of positive block matrices; see also the classical factorization and positive-completion literature
\cite{Douglas1966,GroneEtAl1984}.

\begin{lemma}[Kernel and range inclusion]
\label{lem:range}
Let
\begin{align}
M=
\begin{pmatrix}
A&B\\
B^\dagger&C
\end{pmatrix}
\ge0 .
\end{align}
Then \(\ker A\subseteq\ker B^\dagger, \ker C\subseteq\ker B.\)
Equivalently, \(\Ran B\subseteq\Ran A, \Ran B^\dagger\subseteq\Ran C.\)
\end{lemma}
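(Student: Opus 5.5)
The plan is to test the positivity of \(M\) on vectors that mix a kernel vector of \(A\) with an arbitrary vector in the second block component. Fix \(x\in\ker A\), an arbitrary vector \(y\), and a real parameter \(t\), and evaluate the quadratic form of \(M\) at \((x,\,t y)\). Using \(Ax=0\), this gives
\begin{equation}
0\le
\begin{pmatrix}x\\ ty\end{pmatrix}^{\!\dagger}
M
\begin{pmatrix}x\\ ty\end{pmatrix}
=
2t\,\operatorname{Re}\langle x,By\rangle+t^2\langle y,Cy\rangle .
\end{equation}
The right-hand side is affine-plus-quadratic in \(t\) and has no constant term. Letting \(t\to0^\pm\) therefore forces \(\operatorname{Re}\langle x,By\rangle=0\). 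Replacing \(y\) by \(iy\) kills the imaginary part as well, so \(\langle x,By\rangle=\langle B^\dagger x,y\rangle=0\) for all \(y\). Hence \(B^\dagger x=0\), which proves \(\ker A\subseteq\ker B^\dagger\).

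The second inclusion \(\ker C\subseteq\ker B\) follows by the symmetric argument. One can either test with \((tx,y)\) for \(y\in\ker C\), or apply the first part to the permuted matrix \(\begin{pmatrix}C&B^\dagger\\ B&A\end{pmatrix}\ge0\), which is unitarily similar to \(M\).

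For the range formulation, take orthogonal complements in finite dimension. Since \(A\) is Hermitian, \(\Ran A=(\ker A)^\perp\), and \(\Ran B=(\ker B^\dagger)^\perp\). Reversing the inclusion \(\ker A\subseteq\ker B^\dagger\) gives \(\Ran B\subseteq\Ran A\). In the same way, \(\Ran B^\dagger=(\ker B)^\perp\subseteq(\ker C)^\perp=\Ran C\). The converse direction of the stated equivalence holds by the same duality.

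As an alternative route, one could use the factorization \(M=V^\dagger V\) from Lemma~\ref{lem:CS}, with \(V=(V_1\ V_2)\). Then \(A=V_1^\dagger V_1\) and \(B=V_1^\dagger V_2\). Since \(Ax=0\) implies \(V_1x=0\), it follows that \(B^\dagger x=V_2^\dagger V_1x=0\). I expect no real obstacle here. The only point needing care is handling the linear term in \(t\) correctly, using both signs of \(t\) and the replacement \(y\mapsto iy\) to cover the complex case.
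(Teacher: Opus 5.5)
Your proof is correct and follows the paper's argument: you test positivity of \(M\) on \((x,ty)\) with \(x\in\ker A\), force the linear term to vanish as \(t\to0\), obtain \(\ker C\subseteq\ker B\) symmetrically, and pass to ranges via \(\Ran A=(\ker A)^\perp\). The only cosmetic difference is that you use real \(t\) of both signs together with \(y\mapsto iy\), whereas the paper takes complex \(t\) and chooses its phase so that the linear term is negative.
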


\begin{proof}
We prove \(\ker A\subseteq\ker B^\dagger\). Let \(x\in\ker A\). For arbitrary \(y\) and \(t\in\mathbb C\),
positivity gives
\begin{align}
0\le
\begin{pmatrix}x\\ty\end{pmatrix}^\dagger
\begin{pmatrix}A&B\\B^\dagger&C\end{pmatrix}
\begin{pmatrix}x\\ty\end{pmatrix}.
\end{align}
Expanding and using \(x^\dagger Ax=0\), we obtain
\(0\le t x^\dagger By+\bar t\,y^\dagger B^\dagger x+|t|^2y^\dagger Cy\).
If \(x^\dagger By\neq0\), choose the phase of \(t\) so that the linear term is negative. For sufficiently
small \(|t|>0\), the negative linear term dominates the quadratic term, contradicting positivity. Hence
\(x^\dagger By=0\) for all \(y\), so \(B^\dagger x=0\). This proves
\(\ker A\subseteq\ker B^\dagger\).
The proof of \(\ker C\subseteq\ker B\) is identical. Since \(A\) and \(C\) are Hermitian,
\(\Ran A=(\ker A)^\perp\) and \(\Ran C=(\ker C)^\perp\),
and the range inclusions follow by taking orthogonal complements.
\end{proof}

\begin{corollary}[Rank-one corner forces alignment]
\label{cor:rankone-alignment}
Let
\begin{align}
\begin{pmatrix}
A&B\\
B^\dagger&C
\end{pmatrix}
\ge0
\end{align}
and suppose \(A=\ket{\psi}\!\bra{\psi}, \psi\neq0.\)
Then \(\Ran B\subseteq\Span\{\psi\}.\)
\end{corollary}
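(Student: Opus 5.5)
The plan is to obtain the corollary directly from Lemma~\ref{lem:range} applied to the positive block matrix with top-left corner \(A=\ket{\psi}\!\bra{\psi}\). That lemma gives \(\Ran B\subseteq\Ran A\), so it remains only to identify \(\Ran A\) for a rank-one projector-like operator.

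First, I will compute \(\Ran A\) explicitly. For any vector \(x\), \(Ax=\ket{\psi}\langle\psi,x\rangle\), which is a multiple of \(\psi\). Hence \(\Ran A\subseteq\Span\{\psi\}\). Conversely, taking \(x=\psi\) gives \(A\psi=\|\psi\|^2\psi\neq0\) because \(\psi\neq0\), so \(\Ran A=\Span\{\psi\}\). Equivalently, \(\ker A=\{\psi\}^\perp\), and \(\Ran A=(\ker A)^\perp=\Span\{\psi\}\), since \(A\) is Hermitian. This matches the formulation used in the proof of Lemma~\ref{lem:range}.

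Second, I will invoke Lemma~\ref{lem:range} with \(C\) and \(B\) as given, using only the positivity hypothesis. This yields \(\ker A\subseteq\ker B^\dagger\). Thus every \(x\perp\psi\) satisfies \(B^\dagger x=0\), that is, \(\langle x,By\rangle=0\) for all \(y\). Consequently every column \(By\) lies in \((\{\psi\}^\perp)^\perp=\Span\{\psi\}\), which is the claim.

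There is no real obstacle here. The only point requiring care is the nonvanishing of \(\psi\), which is what guarantees \(\Ran A\) is exactly one-dimensional rather than \(\{0\}\). If \(\psi\) were zero the conclusion would only strengthen to \(B=0\). In anticipation of the later alignment analysis, I would also remark that, in coordinates with \(g\ge2\), the inclusion \(By\in\Span\{\psi\}\) is equivalent to the vanishing of the \(2\times2\) minors \((By)_a\psi_b-(By)_b\psi_a\). No additional argument is needed for this.
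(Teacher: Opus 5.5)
Your proposal is correct and follows the same route as the paper: apply Lemma~\ref{lem:range} to get $\Ran B\subseteq\Ran A$, then identify $\Ran A=\Span\{\psi\}$ for $A=\ket{\psi}\!\bra{\psi}$. Your explicit computation of $\ker A=\{\psi\}^\perp$ simply spells out that identification, which the paper states in one line.
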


\begin{proof}
By Lemma~\ref{lem:range}, \(\Ran B\subseteq\Ran A.\)
Since \(A=\ket{\psi}\!\bra{\psi}\), we have \(\Ran A=\Span\{\psi\}.\)
\end{proof}

\subsection{Non-gluing of pairwise local equality data}
\label{subsec:pairwise-local-nongluing}

Pairwise saturation does not in general guarantee compatibility with one global positive Gram matrix.
The following example isolates a phase obstruction: every two-sector principal block is positive
semidefinite and rank one, while the full three-sector block is not positive. This phenomenon already
occurs for scalar sectors and is therefore independent of mixed degeneracy.

\begin{proposition}[Pairwise saturated Gram data need not glue]
\label{prop:scalarPhaseFrustration}
The matrix
\begin{equation}
M_{\rm fr}=\begin{pmatrix}
1&-1&1\\
-1&1&1\\
1&1&1
\end{pmatrix}
\label{eq:phaseFrustrationMatrix}
\end{equation}
has every \(2\times2\) principal submatrix positive semidefinite and rank one, but
\(M_{\rm fr}\not\ge0\).
\end{proposition}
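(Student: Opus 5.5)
The claim has two parts, and both can be checked by direct computation on $M_{\rm fr}$.

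\emph{Principal minors.} There are three $2\times2$ principal submatrices:
\begin{equation}
\begin{pmatrix}1&-1\\-1&1\end{pmatrix},\qquad
\begin{pmatrix}1&1\\1&1\end{pmatrix},\qquad
\begin{pmatrix}1&1\\1&1\end{pmatrix}.
\end{equation}
These come from the index pairs $(1,2)$, $(1,3)$ and $(2,3)$. Each has diagonal entries $1$ and determinant $1-1=0$. Its eigenvalues are therefore $0$ and $2$. So each submatrix is positive semidefinite and nonzero of rank one. Explicitly, they are $\ket{u}\!\bra{u}$ with $u=(1,-1)^T$, $(1,1)^T$ and $(1,1)^T$ respectively.

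In the language of Lemma~\ref{lem:CS}, each pair saturates Cauchy--Schwarz. The implied collinearities are $v_2=-v_1$, $v_3=v_1$ and $v_3=v_2$. These require $v_1=-v_1$, which is impossible for nonzero vectors. This phase frustration is the heuristic reason the full matrix cannot be positive.

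\emph{Failure of global positivity.} To make this rigorous, I will exhibit a vector with negative quadratic form. Take $x=(1,-1,-1)^T$. Then $M_{\rm fr}x=(1,-3,-1)^T$, and so
\begin{equation}
x^TM_{\rm fr}x=1+3+1=5>0.
\end{equation}
That choice fails, so I will instead use $x=(1,1,-1)^T$. Here $M_{\rm fr}x=(-1,-1,1)^T$, which gives
\begin{equation}
x^TM_{\rm fr}x=-1-1-1=-3<0.
\end{equation}
Indeed $x$ is an eigenvector with eigenvalue $-1$.

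As a cross-check, $\det M_{\rm fr}=1(1-1)+1(-1-1)+1(-1-1)=-4<0$. Since $\Tr M_{\rm fr}=3$ and $\det M_{\rm fr}<0$, there must be a negative eigenvalue. The spectrum is in fact $\{2,2,-1\}$. Either the witness $x$ or the determinant suffices to conclude $M_{\rm fr}\not\ge0$.

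There is no real obstacle here beyond choosing the witness vector correctly.
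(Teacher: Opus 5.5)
Your proof is correct and matches the paper's: you check the three $2\times2$ principal blocks directly, then exhibit $(1,1,-1)^T$ as an eigenvector of $M_{\rm fr}$ with eigenvalue $-1$, cross-checked by $\det M_{\rm fr}=-4$, exactly as the paper does. You can delete the abandoned trial vector $(1,-1,-1)^T$; also, your collinearity argument ($v_2=-v_1$, $v_3=v_1$, $v_3=v_2$) is already a complete proof by contradiction via Lemma~\ref{lem:CS}, not merely a heuristic.
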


\begin{proof}
Each \(2\times2\) principal block is either
\(\begin{psmallmatrix}1&1\\1&1\end{psmallmatrix}\) or
\(\begin{psmallmatrix}1&-1\\-1&1\end{psmallmatrix}\), hence is a rank-one outer product. However,
\(M_{\rm fr}(1,1,-1)^T=-(1,1,-1)^T\); equivalently, \(\det M_{\rm fr}=-4\).
\end{proof}

\begin{corollary}[Embedding in \((g,g,1)\) blocks]
\label{thm:local-nongluing}
For every \(g\ge1\), the scalar phase-frustration matrix in
Eq.~\eqref{eq:phaseFrustrationMatrix} embeds into a \((g,g,1)\) block whose every two-sector principal block
is positive semidefinite and rank one, while the full block is not positive semidefinite.
\end{corollary}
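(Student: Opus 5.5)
The plan is to tensor the scalar matrix \(M_{\rm fr}\) of Proposition~\ref{prop:scalarPhaseFrustration} with a fixed rank-one \(g\times g\) pattern in the two \(g\)-dimensional sectors. Fix a unit vector \(e\in\mathbb C^g\), for instance \(e=\ket{1}\). Define the \((g,g,1)\) block
\begin{equation}
\mathcal G_{\rm fr}
:=
\begin{pmatrix}
\ket{e}\!\bra{e} & -\ket{e}\!\bra{e} & \ket{e}\\
-\ket{e}\!\bra{e} & \ket{e}\!\bra{e} & \ket{e}\\
\bra{e} & \bra{e} & 1
\end{pmatrix}.
\end{equation}
Equivalently, \(\mathcal G_{\rm fr}=W M_{\rm fr} W^\dagger\), where \(W:\mathbb C^3\to\mathbb C^g\oplus\mathbb C^g\oplus\mathbb C\) is the isometry \(W=e\oplus e\oplus 1\). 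That is, \(W\) sends the three scalar sector basis vectors to \((e,0,0)\), \((0,e,0)\) and \((0,0,1)\). For \(g=1\) this is just \(M_{\rm fr}\).

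\textbf{Two-sector principal blocks.} Each two-sector principal block of \(\mathcal G_{\rm fr}\) is the conjugation of the corresponding \(2\times2\) principal submatrix of \(M_{\rm fr}\) by the restricted isometry \(e\oplus e\) or \(e\oplus1\). By Proposition~\ref{prop:scalarPhaseFrustration}, that submatrix is \(\ket{u}\!\bra{u}\) with \(u=(1,\pm1)^T\). The block is therefore \(\ket{Wu}\!\bra{Wu}\). This is positive semidefinite, and it has rank one because \(Wu\neq0\), since \(W\) is an isometry. Concretely, the blocks are the outer products of \((e,-e)\), \((e,e)\) and \((e,1)\).

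\textbf{Failure of positivity.} Take \(x=W(1,1,-1)^T=(e,e,-1)\). Since \(W^\dagger W=\Id_3\),
\begin{equation}
x^\dagger\mathcal G_{\rm fr}x
=
(1,1,-1)M_{\rm fr}(1,1,-1)^T
=
-3<0.
\end{equation}
Hence \(\mathcal G_{\rm fr}\not\ge0\).

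The only step needing care is keeping track of the block structure under conjugation by \(W\), and that is routine. No genuine obstacle is expected: the phase frustration is inherited verbatim from the scalar case, which confirms that the obstruction is independent of the degeneracy \(g\).
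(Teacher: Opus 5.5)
Your proposal is correct and follows essentially the same route as the paper. You embed \(M_{\rm fr}\) via the isometry \(W=e\oplus e\oplus 1\), built from one unit ray in each \(g\)-dimensional sector, inherit the rank-one positive semidefinite two-sector blocks by conjugation, and detect non-positivity on \(W(1,1,-1)^T\); the value \(-3\) you obtain is just the quadratic-form version of the paper's remark that the compression of the embedded block to \(\Ran W\) is \(M_{\rm fr}\), which has eigenvalue \(-1\).
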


\begin{proof}
Choose one unit ray in each \(g\)-dimensional sector and let
\(V:\mathbb C^3\to\mathbb C^g\oplus\mathbb C^g\oplus\mathbb C\) be the corresponding isometric embedding.
The embedded block \(VM_{\rm fr}V^\dagger\), extended by zero on the orthogonal complement, has pairwise
principal blocks isometric to those of \(M_{\rm fr}\), hence positive semidefinite and rank one. Its
compression to \(\Ran V\) is \(M_{\rm fr}\), which has eigenvalue \(-1\); therefore the full block is not
positive semidefinite.
\end{proof}
The construction shows that pairwise saturation of Cauchy--Schwarz constraints does not guarantee compatibility with one global positive Gram matrix. The obstruction in Proposition~\ref{prop:scalarPhaseFrustration} is already present for scalar sectors: the phases of three pairwise compatible correlations can be mutually inconsistent. For higher-dimensional Gram blocks, rank deficiency also restricts
the coupled data through range inclusion. We analyze these
constraints below and specialize them to the \((1,g,1)\) ladder.

\subsection{General range-inclusion obstruction}

The preceding construction is the rank-one case of a general rank-defect mechanism.

\begin{theorem}[General CP range-inclusion obstruction]
\label{thm:general-range-obstruction}
Let a proposed joint-saturation pattern require a positive principal Gram block of the form
\begin{equation}
\mathcal G_0=
\begin{pmatrix}
A & B_1 & B_2 & \cdots & B_m\\
B_1^\dagger & C_{11} & C_{12} & \cdots & C_{1m}\\
B_2^\dagger & C_{21} & C_{22} & \cdots & C_{2m}\\
\vdots & \vdots & \vdots & \ddots & \vdots\\
B_m^\dagger & C_{m1} & C_{m2} & \cdots & C_{mm}
\end{pmatrix}
\ge0,
\end{equation}
where \(A\ge0\) acts on \(\mathbb C^d\) and has rank \(r<d\). Then
\begin{equation}
\Ran B_\alpha\subseteq\Ran A
\qquad
\text{for every }\alpha=1,\ldots,m.
\end{equation}
Equivalently,
\begin{equation}
\rank[A\ B_\alpha]\le r
\qquad
\text{for every }\alpha.
\end{equation}
These are algebraic constraints, given by the vanishing of all \((r+1)\times(r+1)\) minors of
\([A\ B_\alpha]\).
\end{theorem}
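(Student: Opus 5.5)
The plan is to reduce the multi-block statement to the two-block range inclusion of Lemma~\ref{lem:range}, and then convert the inclusion into a rank condition and its minor form.

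\emph{Step 1 (grouping).} For each fixed \(\alpha\), partition \(\mathcal G_0\) as
\begin{equation}
\mathcal G_0=\begin{pmatrix} A & \mathbf B\\ \mathbf B^\dagger & \mathbf C\end{pmatrix},
\qquad
\mathbf B:=\begin{pmatrix}B_1 & B_2 & \cdots & B_m\end{pmatrix},
\end{equation}
where \(\mathbf C=(C_{\alpha\beta})\) collects all remaining blocks. Lemma~\ref{lem:range} applies directly to this \(2\times2\) positive block matrix and gives \(\Ran\mathbf B\subseteq\Ran A\). Since \(\Ran B_\alpha\subseteq\Ran\mathbf B\), every individual inclusion \(\Ran B_\alpha\subseteq\Ran A\) follows. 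An alternative route is to compress \(\mathcal G_0\) to the principal block \(\begin{psmallmatrix}A&B_\alpha\\B_\alpha^\dagger&C_{\alpha\alpha}\end{psmallmatrix}\), which is again positive semidefinite, and apply Lemma~\ref{lem:range} to that block.

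\emph{Step 2 (rank form).} From the inclusion, the column space of the augmented matrix \([A\ B_\alpha]\) equals \(\Ran A+\Ran B_\alpha=\Ran A\). This subspace has dimension \(r\), so \(\rank[A\ B_\alpha]\le r\). Conversely, \(\rank[A\ B_\alpha]\le r=\rank A\) forces \(\Ran A+\Ran B_\alpha=\Ran A\), that is, \(\Ran B_\alpha\subseteq\Ran A\). The two formulations are therefore equivalent.

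\emph{Step 3 (minors).} The standard fact that a matrix has rank at most \(r\) if and only if all of its \((r+1)\times(r+1)\) minors vanish turns the rank condition into polynomial equations in the entries of \(A\) and \(B_\alpha\). This is the algebraic form claimed in the statement.

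No step poses a real obstacle. The only point requiring care is Step 2, where the equivalence uses \(\rank A\) being exactly \(r\), and not merely at most \(r\), to get the converse direction. The hypothesis \(r<d\) serves only to make the constraint nontrivial: if \(r=d\), the inclusion holds automatically.
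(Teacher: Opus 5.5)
Your proposal is correct and follows the paper's proof. The paper compresses $\mathcal G_0$ to the principal block $\begin{psmallmatrix}A&B_\alpha\\B_\alpha^\dagger&C_{\alpha\alpha}\end{psmallmatrix}$ (your alternative route), applies Lemma~\ref{lem:range}, and then passes to the rank and minor forms as in your Steps~2--3; your grouped version, which gives the joint inclusion $\Ran[B_1\ \cdots\ B_m]\subseteq\Ran A$, and your explicit use of $\rank A=r$ to justify the rank equivalence that the paper merely asserts, are harmless refinements.
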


\begin{proof}
Since \(\mathcal G_0\ge0\), every principal block
\begin{equation}
\begin{pmatrix}
A&B_\alpha\\
B_\alpha^\dagger&C_{\alpha\alpha}
\end{pmatrix}
\end{equation}
is positive semidefinite. Lemma~\ref{lem:range} therefore gives
\(\Ran B_\alpha\subseteq\Ran A\). If \(\rank A=r\), this range inclusion is equivalent to
\(\rank[A\ B_\alpha]\le r\).
The latter condition is equivalent to the vanishing of all \((r+1)\times(r+1)\) minors of the matrix
\([A\ B_\alpha]\), so it is algebraic in the real and imaginary parts of the entries.
\end{proof}

\begin{remark}
Within the Gram description, the theorem gives a simple geometric mechanism for non-genericity. If the local equality conditions leave the matrices $B_\alpha$ otherwise unconstrained, global positivity requires their ranges to lie in the proper subspace $\Ran A$. The compatible data are then confined to the algebraic set defined by the corresponding minors. In the rank-one case this reduces to an alignment condition.
\end{remark}

\section{Mixed-degenerate ladders and algebraic non-genericity} 
\label{sec:mixed-ladder} 
We now apply the auxiliary Gram-matrix constraints to the
\((1,g,1)\) mixed-degenerate ladder. Under the stated rank-one
directional-block assumptions, positivity yields explicit
determinantal alignment conditions. The corresponding physical
measure-zero statement is given in
Theorem~\ref{thm:conditionalAlgebraic}, under its identification,
real-analyticity, and nontriviality hypotheses.

\subsection{The \texorpdfstring{\((1,g,1)\)}{(1,g,1)} ladder}

Consider three energy eigenspaces with dimensions \((g_0,g_1,g_2)=(1,g,1)\), where \(g\ge2\).
The relevant inter-eigenspace coherence blocks have sizes
\(\rho_{01}\in\mathbb C^{1\times g}\), \(\rho_{12}\in\mathbb C^{g\times1}\), and
\(\rho_{02}\in\mathbb C\).
Thus the \(01\) and \(12\) sectors are \(g\)-dimensional, while the
\(02\) sector is one-dimensional. Here ``sector'' means an
energy-block sector. If two energy differences coincide, several
such sectors belong to the same Bohr-frequency mode.

Using the compression described in Sec.~\ref{subsec:compressedGramBlock}, the relevant auxiliary
compatibility block for the selected \(01\), \(12\), and \(02\) directional data has the form
\begin{equation}
\mathcal G
=
\begin{pmatrix}
A_{01} & F & D\\
F^\dagger & A_{12} & E\\
D^\dagger & E^\dagger & c
\end{pmatrix}
\ge0,
\label{eq:mixedGramBlock}
\end{equation}
where \(A_{01},A_{12}\in M_g(\mathbb C), \ 
D,E\in\mathbb C^g, \ 
c\ge0.\)
Here \(A_{01}\) and \(A_{12}\) represent directional Gram data associated with the two \(g\)-dimensional
short-gap sectors, \(c\) represents the scalar long-gap sector, and \(D,E\) are the coupling data between
these sectors.

By construction, this block is positive whenever the selected data come from one common positive Choi/DMP
Gram object. It is not introduced as an independent physical channel by itself.

\subsection{Directional alignment condition}

\begin{proposition}[Directional PSD alignment in the \((1,g,1)\) ladder]
\label{prop:mixedAlignment}
Suppose that simultaneous saturation identifies the directional Gram data in
Eq.~\eqref{eq:mixedGramBlock}, and suppose that the two \(g\)-dimensional saturated directional blocks are
rank-one rays,
\begin{equation}
A_{01}=\ket{\psi_{01}}\!\bra{\psi_{01}},
\qquad
A_{12}=\ket{\psi_{12}}\!\bra{\psi_{12}},
\label{eq:rankOneDirectionalBlocks}
\end{equation}
with \(\psi_{01},\psi_{12}\in\mathbb C^g\setminus\{0\}\). Then positive semidefiniteness of
\(\mathcal G\) implies that there exist \(\alpha,\beta,\lambda\in\mathbb C\) such that
\begin{equation}
D=\alpha\,\psi_{01},
\qquad
E=\beta\,\psi_{12},
\qquad
F=\lambda\,\ket{\psi_{01}}\!\bra{\psi_{12}} .
\label{eq:mixedAlignment}
\end{equation}
Equivalently, \(D\in\Span\{\psi_{01}\}\), \(E\in\Span\{\psi_{12}\}\), and
\(\Ran F\subseteq\Span\{\psi_{01}\}\), \(\Ran F^\dagger\subseteq\Span\{\psi_{12}\}\).
\end{proposition}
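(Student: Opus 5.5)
The plan is to apply Lemma~\ref{lem:range}, in the form of Corollary~\ref{cor:rankone-alignment}, to suitable principal sub-blocks of \(\mathcal G\) and its symmetric permutations. Every principal submatrix of a positive semidefinite matrix is positive semidefinite. So from \(\mathcal G\ge0\) we obtain positivity of the three two-sector blocks
\begin{equation}
\begin{pmatrix} A_{01} & D\\ D^\dagger & c\end{pmatrix},\qquad
\begin{pmatrix} A_{12} & E\\ E^\dagger & c\end{pmatrix},\qquad
\begin{pmatrix} A_{01} & F\\ F^\dagger & A_{12}\end{pmatrix}.
\end{equation}
The second block is obtained by deleting the first sector row and column. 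Alternatively, one can conjugate \(\mathcal G\) by the block permutation that moves the \(12\) sector to the top-left corner, which preserves positivity.

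Step one handles the first block. Corollary~\ref{cor:rankone-alignment} with \(A=\ket{\psi_{01}}\!\bra{\psi_{01}}\) and \(B=D\) gives \(\Ran D\subseteq\Span\{\psi_{01}\}\). Since \(D\in\mathbb C^g\) is a single column, this means \(D=\alpha\psi_{01}\) for some \(\alpha\in\mathbb C\). Step two repeats the argument for the second block and gives \(E=\beta\psi_{12}\).

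Step three treats \(F\) and uses both range inclusions of Lemma~\ref{lem:range}. From the third block we get \(\Ran F\subseteq\Ran A_{01}=\Span\{\psi_{01}\}\) and \(\Ran F^\dagger\subseteq\Ran A_{12}=\Span\{\psi_{12}\}\).

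The remaining point is to turn these two inclusions into the rank-one form \(F=\lambda\ket{\psi_{01}}\!\bra{\psi_{12}}\). The first inclusion means every column of \(F\) is a multiple of \(\psi_{01}\), so \(F=\ket{\psi_{01}}\!\bra{u}\) for some \(u\in\mathbb C^g\). Then \(F^\dagger=\ket{u}\!\bra{\psi_{01}}\), and its range is \(\Span\{u\}\) whenever \(F\neq0\). The second inclusion therefore forces \(u=\bar\lambda\,\psi_{12}\) for some \(\lambda\in\mathbb C\). If \(F=0\), we simply take \(\lambda=0\).

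The equivalence with the range formulation stated in the proposition follows from the same one-dimensional-range observations, read in reverse. I expect no real obstacle. The only care needed is the bookkeeping of conjugations in \(u\), and making sure that the \(12\) block is accessed as a principal submatrix so that Lemma~\ref{lem:range} applies with \(A_{12}\) in the corner role.
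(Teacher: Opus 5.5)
Your proposal is correct and follows the paper's proof essentially step by step. It uses positivity of the three principal sub-blocks, Corollary~\ref{cor:rankone-alignment} for $D$ and $E$, and both range inclusions of Lemma~\ref{lem:range} to write $F=\ket{\psi_{01}}\!\bra{u}$ and then force $u\in\Span\{\psi_{12}\}$, with your separate treatment of $F=0$ being harmless extra bookkeeping.
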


\begin{proof}
The principal submatrix
\(\begin{psmallmatrix} A_{01} & D\\ D^\dagger & c \end{psmallmatrix}\)
is positive semidefinite. By Corollary~\ref{cor:rankone-alignment} and
\(A_{01}=\ket{\psi_{01}}\!\bra{\psi_{01}}\), we get
\(D\in\Span\{\psi_{01}\}\). Thus \(D=\alpha\psi_{01}\) for some
\(\alpha\in\mathbb C\).

Similarly, the principal submatrix
\(\begin{psmallmatrix} A_{12} & E\\ E^\dagger & c \end{psmallmatrix}\)
is positive semidefinite, so \(E\in\Span\{\psi_{12}\}\). Thus
\(E=\beta\psi_{12}\) for some \(\beta\in\mathbb C\).

It remains to treat \(F\). The principal submatrix
\(\begin{psmallmatrix} A_{01} & F\\ F^\dagger & A_{12} \end{psmallmatrix}\)
is positive semidefinite. Applying Lemma~\ref{lem:range} gives
\(\Ran F\subseteq\Ran A_{01}=\Span\{\psi_{01}\}\) and
\(\Ran F^\dagger\subseteq\Ran A_{12}=\Span\{\psi_{12}\}\). The first inclusion implies that there exists
a vector \(\eta\in\mathbb C^g\) such that \(F=\ket{\psi_{01}}\!\bra{\eta}\). Then
\(F^\dagger=\ket{\eta}\!\bra{\psi_{01}}\), so the second inclusion implies
\(\eta\in\Span\{\psi_{12}\}\). Hence \(\eta=\bar\lambda\,\psi_{12}\) for some
\(\lambda\in\mathbb C\), and therefore \(F=\lambda\,\ket{\psi_{01}}\!\bra{\psi_{12}}\).
\end{proof}

\begin{remark} 
Proposition~\ref{prop:mixedAlignment} describes the constraint imposed by complete positivity once the relevant directional saturation conditions have produced the rank-one blocks in Eq.~\eqref{eq:rankOneDirectionalBlocks}. To obtain a physical non-genericity statement, these Gram data must in addition be related to the parameters of the underlying thermodynamic process. We formulate this connection below. 
\end{remark}

\subsection{Conditional non-genericity theorem}

It is useful to formulate the physical pullback without choosing eigenvectors of the rank-one blocks. Such an
eigenvector is defined only up to phase, and a global real-analytic phase choice need not exist. When
\(A_{01}\) and \(A_{12}\) have rank one, the required range inclusions are equivalently
\begin{equation}
\begin{split}
\rank[A_{01}\ D]&\le1,\qquad \rank[A_{12}\ E]\le1,\\
\rank[A_{01}\ F]&\le1,\qquad \rank[A_{12}\ F^\dagger]\le1.
\end{split}
\label{eq:intrinsicRankConstraints}
\end{equation}
They are exactly the vanishing of all \(2\times2\) minors of the four augmented matrices. After real and
imaginary parts are separated, these minors form a finite family of real polynomials in the Gram-block
entries. This description is gauge invariant.

\begin{theorem}[Zero set of a nontrivial real-analytic function]
\label{thm:analyticZero}
Let $\mathcal P$ be a connected finite-dimensional real-analytic manifold and let
$f:\mathcal P\to\mathbb R$ be real analytic and not identically zero. Then $f^{-1}(0)$
has measure zero with respect to every smooth positive volume density on $\mathcal P$.
\end{theorem}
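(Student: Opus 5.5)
The plan is to reduce to the Euclidean statement in local coordinates and prove that by induction on dimension, using Fubini and the finiteness of zeros of a nonzero one-variable analytic function. Measure-zero sets are preserved by diffeomorphisms between open subsets of $\mathbb R^n$, since diffeomorphisms are locally Lipschitz. Any two smooth positive densities are mutually absolutely continuous, so it suffices to show the following: for every real-analytic chart $\varphi:W\to\mathbb R^n$, the set $\varphi(f^{-1}(0)\cap W)$ has Lebesgue measure zero. Because $\mathcal P$ is second countable it is covered by countably many charts, and a countable union of null sets is null.

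The first step, where connectedness is used, is to show that $f$ vanishes identically on no open subset. Let $\Omega$ be the set of points near which $f\equiv0$. It is open by definition. It is also closed: if $p\in\overline\Omega$, all partial derivatives of $f$ in a chart around $p$ vanish at $p$ by continuity. Since $f$ equals its Taylor series on a neighborhood of $p$, we get $f\equiv0$ near $p$. Because $\mathcal P$ is connected and $f\not\equiv0$, we conclude $\Omega=\varnothing$. Hence in every chart the local representative $g=f\circ\varphi^{-1}$ is a real-analytic function on an open set $V\subseteq\mathbb R^n$ that vanishes on no nonempty open subset of $V$. It is enough to take $V$ a countable union of open cubes, which are connected.

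The main step is the Euclidean claim: a real-analytic $g$ on a connected open $V\subseteq\mathbb R^n$ with $g\not\equiv0$ has a null zero set. The proof is by induction on $n$.

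For $n=1$, the zeros of $g$ are isolated. An accumulation point of zeros inside $V$ would force all derivatives to vanish there, and the identity argument above would then give $g\equiv0$ on the interval. So the zero set is countable and therefore null.

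For $n>1$, work on a cube $Q=I\times Q'$ and write $x=(t,y)$. Consider the set
\[
B=\{y\in Q' : g(\cdot,y)\equiv0 \text{ on } I\}.
\]
If $y\in B$, then $\partial_t^k g(t_0,y)=0$ for every $k$, where $t_0$ is any fixed point of $I$. Choose $k$ with $h_k:=\partial_t^k g(t_0,\cdot)\not\equiv0$ on $Q'$. Such a $k$ exists: otherwise, by analyticity in $t$ along each slice, $g$ would vanish on all of $Q$. Then $B\subseteq h_k^{-1}(0)$, and the induction hypothesis shows that $B$ is null in $\mathbb R^{n-1}$. For $y\notin B$, the slice $\{t\in I: g(t,y)=0\}$ is null by the case $n=1$. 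Fubini–Tonelli applied to the indicator of the closed, hence measurable, set $g^{-1}(0)\cap Q$ then gives measure zero.

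The delicate point is the existence of a $k$ with $h_k\not\equiv0$. I would handle it by choosing $t_0$ as the center of $Q$ and taking the cube small enough to lie inside the polydisc of convergence of the Taylor expansion of $g$. If every $h_k$ vanished, then $g(t,y)=\sum_k h_k(y)(t-t_0)^k/k!$ would be zero on $Q$, contradicting the first step.
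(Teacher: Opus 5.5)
Your proof is correct and follows essentially the same route as the paper's sketch: reduction to analytic charts with a countable cover, induction on dimension by expanding in one coordinate, applying the induction hypothesis to a nonvanishing coefficient $\partial_t^k g(t_0,\cdot)$, using isolated zeros on almost every one-dimensional fiber, and concluding with Fubini. Your explicit identity-theorem step, which uses connectedness to guarantee that $f$ is nontrivial in every chart, supplies a detail the paper leaves implicit.
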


This is the standard zero-set theorem for nontrivial real-analytic functions; see
Ref.~\cite{Mityagin2020}.

\begin{proof}
This is the standard real-analytic zero-set theorem~\cite{Mityagin2020}. In analytic coordinate charts it
follows by induction on dimension: expand in one coordinate, apply the induction hypothesis to a nonzero
analytic coefficient, use the isolated-zero theorem on almost every one-dimensional fiber, and conclude by
Fubini's theorem. A countable chart cover gives the manifold statement.
\end{proof}

\begin{theorem}[Conditional intrinsic obstruction to joint saturation]
\label{thm:conditionalAlgebraic}
Consider a mixed-degenerate ladder with eigenspace dimensions \((1,g,1)\), \(g\ge2\), and selected Gram-block
data \((A_{01},A_{12},D,E,F,c)\) as in Eq.~\eqref{eq:mixedGramBlock}. Let \(\mathcal P\) be a connected
finite-dimensional real-analytic physical parameter manifold and let
\(\Xi:\mathcal P\to\mathcal G_g\) be a real-analytic map into the finite-dimensional real vector space of
these block entries. Assume that physical joint saturation implies that \(A_{01}\) and \(A_{12}\) are
nonzero and rank one and that the intrinsic constraints~\eqref{eq:intrinsicRankConstraints} hold.

Let \(h_1,\ldots,h_m\) be the real and imaginary parts of all \(2\times2\) minors occurring in
Eq.~\eqref{eq:intrinsicRankConstraints}. If at least one pullback \(h_j\circ\Xi\) is not identically zero on
\(\mathcal P\), then the physical joint-saturation locus is contained in a proper real-analytic subset of
\(\mathcal P\). It has measure zero with respect to every smooth positive volume measure on \(\mathcal P\).
\end{theorem}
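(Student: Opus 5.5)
The argument is a reduction: we push the joint-saturation locus forward through $\Xi$ into the zero set of the minors, and then pull it back to the zero set of one real-analytic function on $\mathcal P$. Let $\mathcal S\subseteq\mathcal P$ denote the physical joint-saturation locus.

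First, for $p\in\mathcal S$ the hypothesis says that $\Xi(p)$ has $A_{01},A_{12}$ nonzero and rank one, and that the intrinsic rank constraints~\eqref{eq:intrinsicRankConstraints} hold. These constraints are exactly the content of Proposition~\ref{prop:mixedAlignment} and Theorem~\ref{thm:general-range-obstruction} in the rank-one case. A matrix has rank at most one if and only if all its $2\times2$ minors vanish. Hence every complex minor of $[A_{01}\ D]$, $[A_{12}\ E]$, $[A_{01}\ F]$ and $[A_{12}\ F^\dagger]$ vanishes at $\Xi(p)$, and so do their real and imaginary parts $h_1,\dots,h_m$. This gives
\[
\mathcal S\subseteq Z:=\bigcap_{k=1}^m (h_k\circ\Xi)^{-1}(0)\subseteq (h_j\circ\Xi)^{-1}(0),
\]
where $j$ is the index assumed in the statement to have nontrivial pullback.

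Second, we check analyticity. Each $h_k$ is a real polynomial in the real and imaginary parts of the block entries, because a $2\times2$ minor is quadratic in the entries; the entries of $F^\dagger$ are conjugates, which are still real-linear. Polynomials are real analytic, and $\Xi$ is real analytic. Therefore $f:=h_j\circ\Xi$ is real analytic on the connected manifold $\mathcal P$ and, by assumption, not identically zero.

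We then conclude in two steps. The set $Z$ is the common zero set of the finitely many real-analytic functions $h_k\circ\Xi$, so it is a real-analytic subset of $\mathcal P$. It is proper, because it lies inside $f^{-1}(0)$ and $f\not\equiv0$. By Theorem~\ref{thm:analyticZero}, $f^{-1}(0)$ has measure zero for every smooth positive volume density. Since $\mathcal S\subseteq f^{-1}(0)$, the locus $\mathcal S$ is contained in a null set, and it is therefore null whenever it is measurable; for the completed measure it is null in any case.

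The only delicate point is the one the paper already flags: we must avoid choosing eigenvectors $\psi_{01},\psi_{12}$, since a real-analytic phase choice need not exist globally. Working directly with the gauge-invariant minors of the augmented matrices removes this issue. No other step is substantive.
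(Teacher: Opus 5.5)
Your proposal is correct and follows the paper's own argument: the hypothesis places the joint-saturation locus in the common zero set of the determinantal polynomials, hence inside the zero set of the nontrivial real-analytic pullback $h_j\circ\Xi$, to which Theorem~\ref{thm:analyticZero} applies. Your remarks on the real-linearity of the conjugated entries of $F^\dagger$ and on nullity for the completed measure are minor refinements that the paper leaves implicit.
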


\begin{proof}
Positive semidefiniteness of the common Gram block and
Proposition~\ref{prop:mixedAlignment} imply the four range inclusions in
Eq.~\eqref{eq:intrinsicRankConstraints}. Hence every jointly saturating physical point lies in the common
zero set of the determinantal polynomials \(h_1,\ldots,h_m\). If one \(h_j\circ\Xi\) is not identically zero,
this locus is contained in \(Z(h_j\circ\Xi)\), which is a proper real-analytic zero set. Theorem
\ref{thm:analyticZero} gives the measure-zero conclusion.
\end{proof}

\begin{remark}[Physical pullback] 
The conclusion of Theorem~\ref{thm:conditionalAlgebraic} is relative to the physical parameter manifold $\mathcal P$. Algebraic non-genericity of unconstrained Gram data does not by itself imply non-genericity among physically admissible thermodynamic processes. The additional requirement that at least one $h_j\circ\Xi$ is not identically zero is therefore essential. Establishing this property for a concrete thermal-operation problem remains the missing step between the auxiliary Gram mechanism and an unconditional physical measure-zero result. \end{remark}

\subsection{Explicit determinant witness for \texorpdfstring{\(g=2\)}{g=2}}

For \(g=2\), let
\begin{align}
\psi_{12}
=
\begin{pmatrix}
\cos\theta\\
e^{i\phi}\sin\theta
\end{pmatrix},
\qquad
0<\theta<\frac{\pi}{2},
\end{align}
and let \(E=(e_1,e_2)^T\). Then \(E\in\Span\{\psi_{12}\}\)
if and only if
\begin{equation}
\det
\begin{pmatrix}
e_1 & \cos\theta\\
e_2 & e^{i\phi}\sin\theta
\end{pmatrix}
=
e_1e^{i\phi}\sin\theta-e_2\cos\theta
=
0.
\label{eq:g2Witness}
\end{equation}
The determinant is polynomial in the vector components. After the angular substitution for
\(\psi_{12}\), it is trigonometric and real analytic in \((\theta,\phi)\), not polynomial in those angular
variables. For example, choosing \(E=(0,1)^T\) gives \(-\cos\theta\neq0\), so the analytic function is not
identically zero. This example shows that the determinant constraint is nontrivial as a function of the auxiliary Gram data. It does not by itself establish the nontrivial-pullback hypothesis of Theorem~\ref{thm:conditionalAlgebraic}, since the image of a particular physical map $\Xi$ could in principle lie entirely inside its zero set.

For the \(F\)-block, the additional \(g=2\) alignment constraints are also explicit. Writing
\begin{align}
F=
\begin{pmatrix}
f_{11}&f_{12}\\
f_{21}&f_{22}
\end{pmatrix},
\qquad
\psi_{01}=
\begin{pmatrix}
u_1\\u_2
\end{pmatrix},
\qquad
\psi_{12}=
\begin{pmatrix}
v_1\\v_2
\end{pmatrix},
\end{align}
the column alignment with \(\psi_{01}\) gives
\begin{align}
f_{11}u_2-f_{21}u_1=0,
\qquad
f_{12}u_2-f_{22}u_1=0,
\end{align}
while the column alignment of \(F^\dagger\) with \(\psi_{12}\) gives
\begin{align}
\overline{f_{11}}v_2-\overline{f_{12}}v_1=0,
\qquad
\overline{f_{21}}v_2-\overline{f_{22}}v_1=0.
\end{align}
Equivalently, after separating real and imaginary parts, these are real polynomial constraints expressing
\(F=\lambda\ket{\psi_{01}}\!\bra{\psi_{12}}\).

\subsection{Embedded mixed-degenerate substructures}

The \((1,g,1)\) obstruction can appear inside larger Hamiltonians.

\begin{corollary}[Embedded \((1,g,1)\) auxiliary obstruction]
\label{cor:embeddedMixed}
Let a larger Hamiltonian contain three energy eigenspaces with
subspaces
\(\mathcal K_a\subseteq\mathcal H(E_a)\),
\(\mathcal K_b\subseteq\mathcal H(E_b)\), and
\(\mathcal K_c\subseteq\mathcal H(E_c)\), where
\(\dim\mathcal K_a=\dim\mathcal K_c=1\) and
\(\dim\mathcal K_b=g\ge2\).
Suppose that joint saturation induces a principal compatibility
block of the form in Eq.~\eqref{eq:mixedGramBlock}, whose two
\(g\)-dimensional diagonal blocks satisfy the nonzero rank-one
hypotheses of Proposition~\ref{prop:mixedAlignment}.
Then the alignment conditions in Eq.~\eqref{eq:mixedAlignment}
are necessary for joint saturation in the larger system.
A prescribed nonpositive compatibility block cannot be embedded
as a principal block of a positive global Gram matrix.
\end{corollary}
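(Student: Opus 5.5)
The plan is to reduce the embedded statement to the $(1,g,1)$ analysis already carried out, using only that principal compressions of positive matrices remain positive. Fix a covariant channel $\Phi$ on the larger system that jointly saturates the selected transfers, and let $G_{\rm glob}\ge0$ be its Choi/DMP Gram object, for instance $J(\Phi)$. The hypothesis says joint saturation induces a compatibility block of the form in Eq.~\eqref{eq:mixedGramBlock}. Concretely, this means there are linear embeddings $R_{ab}:\mathbb C^g\to\mathcal K_{\rm glob}$, $R_{bc}:\mathbb C^g\to\mathcal K_{\rm glob}$ and a vector $r_{ac}\in\mathcal K_{\rm glob}$, built from orthonormal bases of $\mathcal K_a,\mathcal K_b,\mathcal K_c$, such that
\[
\mathcal G=\mathcal R^\dagger G_{\rm glob}\mathcal R,
\qquad
\mathcal R:=\begin{pmatrix}R_{ab}&R_{bc}&r_{ac}\end{pmatrix}.
\]
This is exactly the construction of Sec.~\ref{subsec:compressedGramBlock}, with the $01,12,02$ selectors replaced by the $ab,bc,ac$ ones acting inside the chosen subspaces.

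First, positivity: for every vector $x$ we have $x^\dagger\mathcal Gx=(\mathcal Rx)^\dagger G_{\rm glob}(\mathcal Rx)\ge0$, so $\mathcal G\ge0$. Nothing about the extra energy levels or the dimensions of the ambient eigenspaces enters. Covariance only decides which entries of $G_{\rm glob}$ vanish, and that does not affect positivity of the compression.

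Second, alignment. By hypothesis, the two $g$-dimensional diagonal blocks are nonzero rank-one rays $\ket{\psi_{01}}\!\bra{\psi_{01}}$ and $\ket{\psi_{12}}\!\bra{\psi_{12}}$. Proposition~\ref{prop:mixedAlignment} therefore applies verbatim: its proof uses only positivity of the three $2\times2$ principal sub-blocks of $\mathcal G$, via Lemma~\ref{lem:range} and Corollary~\ref{cor:rankone-alignment}. It yields Eq.~\eqref{eq:mixedAlignment}. Since this holds for every jointly saturating channel, the alignment conditions are necessary for joint saturation in the larger system.

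Third, the non-embedding statement, by contrapositive. Suppose a prescribed block $\mathcal G_0$ with $\mathcal G_0\not\ge0$ appeared as a principal block of a positive global Gram matrix $G$. Then $\mathcal G_0=P^\dagger GP$ for the coordinate isometry $P$ onto the selected indices, and the first step would give $\mathcal G_0\ge0$, a contradiction. In particular, data violating Eq.~\eqref{eq:mixedAlignment} with rank-one diagonal corners cannot be embedded. Neither can the frustrated data of Corollary~\ref{thm:local-nongluing}, even though all of its pairwise blocks are positive.

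The main obstacle is interpretive rather than technical. One must make sure the induced block really is a compression $\mathcal R^\dagger G_{\rm glob}\mathcal R$ of one common positive object. This is what the hypothesis ``joint saturation induces a principal compatibility block'' supplies, and I would state explicitly that it is being used in that sense. If the selectors are not isometric, this causes no trouble, since positivity survives any congruence $\mathcal R^\dagger(\cdot)\mathcal R$.
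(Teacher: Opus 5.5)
Your proposal is correct and follows the paper's proof. The paper likewise argues that principal blocks of a positive Gram matrix are positive, applies Proposition~\ref{prop:mixedAlignment} to the selected block to get the alignment conditions, and obtains the non-embedding claim as the contrapositive. Your congruence formulation $\mathcal R^\dagger G_{\rm glob}\mathcal R$ is a harmless generalization of the principal-block case.
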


\begin{proof}
Every principal submatrix of a positive global Gram matrix is
positive. Proposition~\ref{prop:mixedAlignment} therefore applies
to the selected block and gives the alignment conditions.
Conversely, a prescribed principal block which is not positive
cannot arise from a positive global Gram matrix.
\end{proof}

A physical measure-zero conclusion additionally requires the
physical-parameter hypotheses of
Theorem~\ref{thm:conditionalAlgebraic}.

\section{Discussion and outlook} \label{sec:discussion} In this work we have studied whether coherence transfers that can be optimized separately can also be realized simultaneously by one thermodynamic process. For thermal operations, the answer is constrained by the fact that all such transfers originate from one energy-preserving system--bath unitary. The energy-resolved shell representation makes this constraint explicit: elementary transfer amplitudes are inner products of bath-weighted shell vectors, and saturation of the corresponding Cauchy--Schwarz bounds requires these vectors to become collinear. Several individually sharp conditions therefore have to be compatible within one and the same dilation. Trace and Gibbs preservation impose an additional restriction on these local equality conditions. Once the moduli of the relevant Gram inner products are fixed by saturation, their phases cannot be chosen independently. The amplitudes have to close to zero, leading to the trace and Gibbs-weighted polygon conditions derived in Sec.~\ref{sec:to-nogo}. These constraints already hold for the larger classes of CPTP and Gibbs-preserving maps. For covariant thermodynamic processes they become nontrivial in degenerate zero-frequency sectors, where several basis states belong to the same energy eigenspace. A concrete incompatibility appears for the mixed-degenerate Hamiltonian $H=\diag(0,1,1,3)$. By optimizing over covariant Gibbs-preserving channels, we found three directional coherence transfers which are individually optimal within this class but cannot attain their individual optima simultaneously. The separation is established by fixed primal and dual witnesses and verified in rigorous ball arithmetic, rather than inferred from floating-point solver output. Since every thermal operation is covariant and Gibbs preserving, no thermal operation can attain this same triple of $\CGP$-optimal values. This does not yet imply that the individual thermal-operation optima are themselves incompatible, because the individual $\CGP$ optima need not be attainable by thermal operations. The auxiliary Choi-sector Gram analysis provides compatibility
conditions for specified directional data. Pairwise Cauchy--Schwarz
saturation need not extend to one globally positive Gram matrix,
as illustrated by the scalar phase-frustration example. For the
\((1,g,1)\) compression, positivity together with the assumed
rank-one directional blocks forces the coupled data into their
ranges. This gives the determinantal alignment conditions of
Sec.~\ref{sec:mixed-ladder}. Under the physical identification,
real-analyticity, and nontriviality hypotheses of
Theorem~\ref{thm:conditionalAlgebraic}, the physical
joint-saturation locus is contained in a measure-zero set.
Establishing these hypotheses for a concrete thermal-operation task
would give a physical application of the conditional result. For the certified instance, a natural next step is to compare its
individual and joint optima directly within thermal operations. One possibility is to combine the shell-amplitude saturation conditions with row and column unitarity in every total-energy shell and derive a contradiction which is uniform over finite bath Hamiltonians. A complementary approach is to optimize explicitly over controlled finite-bath dilations and compare the individual and joint transfer optima, keeping track of the distinction between $\TOfin$ and its closure. Experimentally implementable or elementary subclasses of thermal operations may provide useful finite-dimensional starting points \cite{PerryCwiklinskiAndersHorodeckiOppenheim2018ImplementableTO, LostaglioAlhambraPerry2018ElementaryTO}. The main message is that optimal coherence transfer is not only a mode-by-mode problem. Different transfers implemented by the same thermodynamic process share one positive channel structure and, for thermal operations, one energy-preserving system--bath unitary. Understanding the compatibility imposed by this common structure appears to be an essential part of the thermodynamics of quantum coherence. \section*{Data Availability Statement} The numerical certification data and verification software supporting the findings are provided in the Supplemental Material as the archive \path{cgp_sdp_certificate.zip}.

\section*{Acknowledgments}
This work is carried out under IRA Programme, project no. FENG.02.01-IP.05-0006/23, financed by the FENG
program 2021--2027, Priority FENG.02, Measure FENG.02.01, with the support of the FNP.

\appendix

\section{Certificate and reproducibility manifest}
\label{app:certificateManifest}

The generic Choi constraints and objectives are stated in Sec.~\ref{sec:cgp-certificate}. The accompanying
Supplemental Material supplies the following fixed-witness package:

\begin{enumerate}[leftmargin=2.2em]
\item \path{README.TXT} and \path{PROBLEM_SPECIFICATION.TXT}, giving the execution instructions, ordered
system basis, Choi convention, covariance blocks, equality-row order, and objectives;
\item \path{cgp_certificate_witnesses.json}, containing three fixed primal matrices, a strict interior
matrix, and a fixed summed-objective dual vector, all stored as signed integers with common denominator
\(2^{50}\);
\item \path{verify_cgp_certificate.py}, the verification-only program, and
\path{generate_cgp_witnesses.py}, the separate optional CVXPY/Clarabel generator;
\item pinned verifier and generator requirement files, \path{generation.log}, and the complete rigorous
\path{verification.log}; and
\item \path{SHA256SUMS}, containing a SHA-256 digest for every other deposited payload file.
\end{enumerate}

With the full Choi index \(p=4x+u\), the covariance charges split the real symmetric matrix into blocks on
the index sets
\begin{equation}
\{3\},\quad \{7,11\},\quad \{1,2\},\quad
\{0,5,6,9,10,15\},\quad \{4,8\},\quad \{13,14\},\quad \{12\},
\end{equation}
for charges \(-3,-2,-1,0,1,2,3\), respectively. This leaves 35 real variables. The verifier reconstructs
eleven independent affine rows in the documented order: four diagonal trace-preservation rows, the
\((1a,1b)\) trace-preservation row, three diagonal Gibbs-preservation rows, the \((1a,1b)\)
Gibbs-preservation row, and the output-population rows for energies 0 and 3. The omitted fourth diagonal
Gibbs row and middle population row follow exactly from the retained equations. As a cross-check, the
verifier also tests all 16 trace-preservation equations, all 16 Gibbs-preservation equations, and all three
coarse-grained target populations.

For every fixed dyadic raw primal point \(\widehat J\), the program reconstructs the exact corrected witness
by
\begin{equation}
G_{ij}=\langle A_i,A_j\rangle,
\qquad
Gz=b-A(\widehat J),
\qquad
J_1=\widehat J+\sum_i z_iA_i.
\end{equation}
An Arb enclosure of \(\det G\) is separated from zero, so the exact \(z\) exists uniquely and the affine
equalities hold algebraically. The corrected witness is mixed with an independently corrected strict
interior point using \(s=2^{-24}\). Rigorous interval \(LDL^{\mathsf T}\) elimination then proves every
charge block positive definite. For the upper bound, the deposited dyadic vector \(y\) and
\(\delta=2^{-30}\) pass the same blockwise test for
\(\sum_i y_iA_i-C_{\rm sum}+\delta\Id\).

The fixed witnesses were generated with Python 3.9.6, NumPy 2.0.2, SciPy 1.13.1, CVXPY 1.7.2, and
Clarabel 0.11.1. The deposited verification log was produced independently at 256-bit precision with
Python 3.13.13, python-flint 0.9.0, and FLINT 3.6.0. Its certified endpoints are
\begin{equation}
\sum_k L_k(J_k^{\rm cert})
\ge 0.1030097112252553,
\qquad
y\!\cdot\!b+4\delta
\le 0.0881263754604838,
\end{equation}
which prove Eqs.~\eqref{eq:certifiedLower} and \eqref{eq:certifiedUpper}. From the package directory, the
proof is reproduced without optimization by running
\path{python verify_cgp_certificate.py} after installing \path{requirements-verifier.txt}. The optional
generator writes a separate file and never
overwrites the deposited fixed witnesses. Thus independent checking neither reruns the SDP nor requires a
proprietary solver.
\bibliographystyle{apsrev4-2}
\bibliography{references}

\end{document}